\documentclass[10pt,twocolumn,twoside]{IEEEtran}
\usepackage{subfigure}
\usepackage{setspace}
\usepackage{amsmath}
\usepackage{amssymb}
\usepackage{amsfonts}
\usepackage{amscd}
\usepackage{mathdots}
\usepackage{mathrsfs}
\usepackage[final]{graphicx}
\usepackage{graphicx}
\usepackage{psfrag}
\usepackage{epsfig}
\usepackage{color}
\usepackage{url}
\usepackage{textcomp}
\usepackage{multirow}
\input{epsf.sty}
\newtheorem{theorem}{Theorem}
\newtheorem{lemma}{Lemma}
\newtheorem{definition}{Definition}



\begin{document}
\markboth{IEEE Transactions on Information Theory, Vol. XX, No. XX, XXXX}{Sanjay \MakeLowercase{\textit{et al.}}: Maximum Rate of Unitary-Weight, Single-Symbol Decodable STBCs}

\title{Maximum Rate of Unitary-Weight, Single-Symbol Decodable STBCs}

\vspace{1.00cm}

\author{Sanjay Karmakar, K. Pavan Srinath and B. Sundar Rajan, Senior Member, IEEE
\thanks{This work was partly supported by the DRDO-IISc Program on Advanced Research in Mathematical
Engineering and by the Council of Scientific \&
Industrial Research (CSIR), India, through Research Grant (22(0365)/04/EMR-II) to B.S.~Rajan.}
\thanks{Sanjay Karmakar is with the Electrical and Computer Science Department at the University of Colorado at Boulder. Part of this work was carried out when he was at the Indian Institute of Science, Bangalore}
 \thanks{K. Pavan Srinath and B. Sundar Rajan are with the Department of Electrical Communication Engineering, Indian Institute of Science, Bangalore - 560012. email:bsrajan@ece.iisc.ernet.in.}
\thanks{Different parts of the content of this paper appeared in the Proc. of IEEE International Symposium on Information on Information Theory (ISIT 2006), Seattle, Washington, July 09-14, 2006.}
}

\markboth{IEEE Transactions on Information Theory, Vol. XX, No. XX, XXXX}{Sanjay \MakeLowercase{\textit{et al.}}: Maximum Rate of Unitary-Weight, Single-Symbol Decodable STBCs}

\maketitle

\begin{abstract}
It is well known that the Space-time Block Codes (STBCs) from Complex orthogonal designs (CODs) are  single-symbol decodable/symbol-by-symbol decodable (SSD). The weight matrices of the square CODs are all unitary and obtainable from the unitary matrix representations of Clifford Algebras when the number of transmit antennas $n$ is a power of 2. The rate of the square CODs for $n = 2^a$ has been shown to be $\frac{a+1}{2^a}$ complex symbols per channel use. However, SSD codes having unitary-weight matrices need not be CODs, an example being the Minimum-Decoding-Complexity STBCs from Quasi-Orthogonal Designs. In this paper, an achievable upper bound on the rate of any unitary-weight SSD code is derived to be $\frac{a}{2^{a-1}}$ complex symbols per channel use for $2^a$ antennas, and this upper bound is larger than that of the CODs. By way of code construction, the interrelationship between the weight matrices of unitary-weight SSD codes is studied. Also, the coding gain of all unitary-weight SSD codes is proved to be the same for QAM constellations and conditions that are necessary for unitary-weight SSD codes to achieve full transmit diversity and optimum coding gain are presented.
\end{abstract}

\begin{keywords}
Anticommuting matrices, Complex orthogonal designs, Minimum-Decoding-Complexity codes, Quasi-orthogonal designs, Space-time block codes.
\end{keywords}


\section{Introduction}
\label{sec1}
\pagenumbering{arabic}
Space-Time Block Codes from Complex Orthogonal Designs \cite{TJC} are popular because they offer full transmit diversity for any arbitrary signal constellation and also are single-symbol decodable. In fact, CODs are \emph{single-real-symbol} decodable for constellations such as the rectangular QAM, which can be expressed as a Cartesian product of two PAM constellations, while for constellations such as PSK, CODs are \emph{single-complex-symbol} decodable. The \emph{weight matrices}, also called \emph{linear dispersion matrices} \cite{HaH} (refer Subsection \ref{ssd_codes} for a definition of weight matrices), of the square CODs are all unitary, and a detailed construction method to obtain these weight matrices from irreducible matrix representations of Clifford Algebras has been presented in \cite{TiH} for $2^a$ transmit antennas. It has also been shown that the maximum rate of the square CODs for $2^a$ transmit antennas is $\frac{a+1}{2^a}$ complex symbols per channel use. Although rectangular CODs \cite{liang} offer a higher rate, they are not delay efficient, making square CODs more attractive in practice.

In general, single-complex-symbol decodable codes need not be CODs. Throughout this paper, unless otherwise mentioned, SSD codes refer to single-complex-symbol decodable codes. The Co-ordinate Interleaved Orthogonal Designs (CIODs) \cite{KhR} have been shown to be SSD codes, while offering full transmit diversity for specific complex constellations only. However, the CIODs have non-unitary-weight matrices. SSD codes, that include unitary-weight codes and rectangular designs, have been reported in \cite{WWX}, \cite{YGT} and are popularly known as Minimum-Decoding-Complexity codes from Quasi-orthogonal designs (MDCQODs). The rates of both the CIODs and the class of codes reported in \cite{YGT} for $2^a$ transmit antennas have been shown to be $\frac{a}{2^{a-1}}$ complex symbols per channel use. In \cite{YGT}, the maximum rate of the MDCQODs has been reported, and this rate includes that for rectangular designs. However, the maximum rate of general SSD codes has not been reported so far in the literature, to the best of our knowledge.

In this paper, we make the following contributions.
\begin{itemize}
\item We derive an upper bound on the rate of \textit{unitary-weight SSD codes}  for $2^a$ transmit antennas. This upper bound is found to be $\frac{a}{2^{a-1}}$ complex symbols per channel use.
\item We give a general construction method to obtain codes that meet this upper bound and further show the interrelationship between the weight matrices of general unitary-weight SSD codes. All known unitary-weight SSD codes including square MDCQODs are special cases of this construction.
\item We prove that all unitary-weight SSD codes have the same coding gain and specifically for QAM constellations, we provide the angle of rotation that ensures full transmit diversity and optimum coding gain for all unitary-weight SSD codes.
\end{itemize}

The organization of the paper is as follows. Section \ref{sys_model} gives the system model and relevant definitions. Section \ref{sec3} introduces the notion of normalization and its use in the analysis of unitary-weight SSD codes. Section \ref{sec4} provides the upper bound on the rate of unitary-weight SSD codes and the structure of general unitary-weight SSD codes. Diversity conditions, coding gain calculations for QAM and simulation results are given in Section \ref{sec5}, Subsections \ref{sec5a} and \ref{sec5b}, respectively. Discussions on the direction for future research constitute Section \ref{sec6}.

\textit{\textit{Notations}}: $\mathbb{R}$ and $\mathbb{C}$ denote the field of real and complex numbers, respectively and $j$ represents $\sqrt{-1}$. $GL(n,\mathbb{C})$ denotes the group of invertible matrices of size $n \times n$ with complex entries. For any complex matrix $A$, $tr(A)$, $\Vert A \Vert$, $A^H$ and $det(A)$ represent the trace, the Frobenius norm, the Hermitian  and  the determinant of $A$, respectively. $I_n$ and $O_n$ represent the $n \times n$ identity matrix and the zero matrix, respectively. For a complex random variable $X$, $X \sim \mathcal{N}_{\mathbb{C}}(0,N)$ denotes that $X$ has a complex normal distribution with mean $0$ and variance $N$. For a complex variable $x$, $x_I$ and $x_Q$ represent the real and the imaginary parts of $x$, respectively, and $\vert x \vert$ denotes the absolute value of $x$. For a set $\mathcal{S}$, $\vert \mathcal{S} \vert$ denotes the cardinality of $\mathcal{S}$.

\section{System Model}
\label{sys_model}
We consider Rayleigh quasi-static flat-fading MIMO channel with full channel state information (CSI) at the receiver and no CSI at the transmitter. We assume a MIMO system with $n$ transmit antennas and $m$ receive antennas. Since we are considering only square STBCs in this paper, the number of time slots is also $n$. The channel model is
\begin{equation*}
Y = SH + N,
\end{equation*}
\noindent where $S \in \mathbb{C}^{n\times n}$ is the codeword matrix, transmitted over $n$ channel uses, $N \in \mathbb{C}^{n \times m}$ is a complex white Gaussian noise matrix with i.i.d. entries $\sim
\mathcal{N}_{\mathbb{C}}\left(0,N_{0}\right)$, $H \in \mathbb{C}^{n \times m}$ is the channel matrix with the entries assumed to be i.i.d. circularly symmetric Gaussian random variables $\sim \mathcal{N}_\mathbb{C}\left(0,1\right)$ and $Y \in \mathbb{C}^{n \times m}$ is the received matrix.

\begin{definition}\label{def}(\textit{STBC}) A \textit{space-time block code} $\mathcal{S}$ is a set of complex matrices called codeword matrices. For a system with $n$ transmit antennas, a codeword matrix is a $T \times n$ matrix, where $T$ is the number of time slots ($T = n$ in this paper) and the $(i,j)^{th}$ entry of the codeword matrix refers to the signal transmitted by the $j^{th}$ transmit antenna in the $i^{th}$ time slot.
\end{definition}

\begin{definition}\label{def1}$\left(\textit{Code rate}\right)$ If there are $k$ independent complex information symbols in the codeword which are transmitted over $T$ channel uses, then, the code rate is defined to be $k/T$  complex symbols per channel use. For instance, for the Alamouti code, $k=2$ and $T=2$. So, its code rate is 1 complex symbol per channel use.
\end{definition}

\begin{definition}\label{def2}(\textit{Full-Diversity Code}) An  STBC encoding symbols chosen from a constellation $\mathcal{A}$ is said to offer \emph{full-diversity} iff for every possible codeword pair $(S,\hat{S})$, with $S \neq \hat{S}$, the codeword difference matrix $S-\hat{S}$ is full-ranked \cite{TSC}.
\end{definition}

In general, whether a code offers full-diversity or not depends on the constellation that it employs. A code can offer full diversity for a certain complex constellation $\mathcal{A}$ but not for another complex constellation. The CODs are special in this aspect since they offer full-diversity for any arbitrary complex constellation.

\begin{definition}\label{def3}(\textit{Coding Gain}) The coding gain $\delta$ of an STBC is defined as 
\begin{equation*}
\delta = \min_{S-\hat{S},S \neq \hat{S}}\left(\prod_{i=1}^{r}\lambda_i\right)^{\frac{1}{r}},
\end{equation*}
\noindent where  $\lambda_i, i = 1,2,\cdots,r$, are the non-zero eigenvalues of the matrix $\left(S - \hat{S}\right)^H\left(S - \hat{S}\right)$ and $r$ is the minimum of the rank of $\left(S - \hat{S}\right)^H\left(S - \hat{S}\right)$ for all possible codeword pairs $(S, \hat{S})$, $S \neq \hat{S}$.
\end{definition}

If the code offers full-diversity for a constellation $\mathcal{A}$, then, the coding gain is  $\delta_{min}^{\frac{1}{n}}$, where $\delta_{min}$ is the minimum of the determinant of the matrix $\left(S - \hat{S}\right)^H\left(S - \hat{S}\right)$ among all possible codeword matrix pairs $(S, \hat{S})$, with $S \neq \hat{S}$.

\subsection{Single-Symbol Decodable Codes}\label{ssd_codes}

In this subsection, we formally define and classify linear SSD codes. Any $n\times n$ codeword matrix $S$ of a linear dispersion STBC $\mathcal{S}$ with $k$ complex information symbols $x_1,x_2,\cdots,x_k$ can be expressed as
\begin{equation}
\label{ldceqn}
S  = \sum_{i=1}^{k} (x_{iI}A_{iI}+x_{iQ}A_{iQ}),
\end{equation}
\noindent where $x_i= x_{iI}+jx_{iQ}$,  $1\leq i \leq k,$  take values from a complex constellation ${\cal A}$. Then, $\vert \mathcal{S} \vert$, i.e., the  number of codewords, is $\vert{\cal A}\vert^k$. The set of $n\times n$ complex matrices $\{A_{iI}, A_{iQ} \}, 1 \leq i \leq k$, called {\it weight matrices} define $\mathcal{S}$. Notice that in \eqref{ldceqn}, all the $2k$ weight matrices are required to form a linearly independent set over $\mathbb{R}$, since we are transmitting $k$ independent information symbols.

Assuming that perfect channel state information (CSI) is available at the receiver, the maximum likelihood (ML) decision rule  minimizes the metric,
\begin{equation}
\label{mlmetric}
M(S) \triangleq tr({({Y-SH})}^{H}({Y-SH})) = {\Vert {Y}-{SH} \Vert}^2.
\end{equation}
Since there are $\vert \mathcal{A} \vert ^k$ different codewords, in general, ML decoding requires $\vert \mathcal{A} \vert ^k$ computations, one for each codeword. Suppose the set of weight matrices are chosen such that the decoding metric \eqref{mlmetric} could be decomposed as 
\begin{equation*}
M(S)  = \sum_{j=1}^p {f_j(x_{(j-1)q+1},x_{(j-1)q+2},\cdots, x_{(j-1)q+q})},
\end{equation*}
which is a sum of $p$ positive terms, each involving exactly $q$ complex variables only, where $pq=k$. Then, decoding requires $\sum_{j=1}^p |\mathcal{A}|^q = p |\mathcal{A}|^q$ computations and the code is called a $q$-symbol decodable code \cite{sanjay}. The case $q=1$ corresponds to $SSD$ codes that include the well known CODs as a proper subclass, and have been extensively studied \cite{TJC}, \cite{TiH}, \cite{KhR}, \cite{WWX}, \cite{YGT}, \cite{Ala}. The codes corresponding to $q=2$, are called \textit{Double-Symbol-Decodable (DSD)} codes. The Quasi-Orthogonal Designs studied in \cite{Jaf}, \cite{SuX}, \cite{ShP} are proper subclasses of DSD codes.
\begin{definition}\cite{TJC}
A square {\textit{complex orthogonal design}} $\mathcal{S}$ for $n$ transmit antennas is a set of codeword matrices of size $ n \times n $, with each codeword matrix $S$ satisfying the following conditions:
\begin{itemize}
\item the entries of $S$ are complex linear combination of  $ x_1 , x_2 , \cdots, x_k $ and their complex conjugates $ x_1^* , x_2^* , \cdots, x_k^* $.
\item (Orthonormality:)
\begin{eqnarray*}
  S^H S=({\vert x_1 \vert}^2 + \cdots+{\vert x_k \vert}^2)I_n
 \end{eqnarray*}
holds for any complex values for $ x_i , i = 1, 2,\cdots, k$.
\end{itemize}
\end{definition}

 A set of necessary and sufficient conditions for $\mathcal{S}$ to be a COD is \cite{TJC,TiH}
\begin{equation}
\label{A1}
A_{iI}^{H}A_{iI}  =A_{iQ}^{H}A_{iQ}  =I_n,~~~i=1,2,\cdots,k;
\end{equation}
\vspace{-7mm}
\begin{subequations}
\label{A2}
\begin{align}
A_{iI}^{H}A_{jQ} + A_{jQ}^{H}A_{iI} & =O_n \label{A21}, \\
A_{iI}^{H}A_{jI} + A_{jI}^{H}A_{iI} & =O_n \label{A22}, \\
A_{iQ}^{H}A_{jQ} + A_{jQ}^{H}A_{iQ} & =O_n\label{A23},
\end{align}
\end{subequations}
for $1 \leq i \neq j \leq k $, and
\begin{eqnarray}
\label{A3}
A_{iI}^{H}A_{iQ}+ A_{iQ}^{H}A_{iI}  =O_n,  ~~~   i=1,2,\cdots,k.
\end{eqnarray}
\noindent

STBCs obtained from CODs \cite{TJC}, \cite{TiH} are SSD like the well known Alamouti code \cite{Ala}, and  satisfy \eqref{A1}, \eqref{A2} and \eqref{A3}. For $S$ to be  SSD, it is not necessary that it satisfies \eqref{A1} and \eqref{A3}, i.e., it is sufficient that it satisfies only  \eqref{A2} - this result was shown in \cite{KhR}. Since then, different classes of SSD codes that are not CODs have been studied by several authors, \cite{KhR}, \cite{WWX}, \cite{YGT}. SSD codes can be systematically classified as follows.
\begin{enumerate}
\item Linear STBCs satisfying \eqref{A1}, \eqref{A2} and \eqref{A3} are \textit{CODs}.
\item Linear STBCs satisfying \eqref{A2} are called {\it SSD codes}. These may or may not satisfy \eqref{A1} and \eqref{A3}.
\item Linear STBCs satisfying \eqref{A1} and \eqref{A2} and not satisfying \eqref{A3} are called \textit{Unitary-Weight SSD codes}.
\item Linear STBCs satisfying \eqref{A2} and not satisfying \eqref{A1} are called \textit{Non-Unitary-weight SSD codes}. These may or may not satisfy \eqref{A3}.
\end{enumerate}
The codes discussed in  \cite{KhR}, which are called CIODs, constitute an example class of Non-unitary-weight SSD codes. The classes of codes  studied in \cite{YGT} are unitary-weight SSD codes. The classes of codes studied in \cite{WWX}, called Minimum Decoding Complexity codes from Quasi-Orthogonal Designs (MDCQOD codes), include some unitary-weight SSD codes as well as non unitary-weight SSD codes.

The notion of SSD codes have been extended to coding for MIMO-OFDM systems in \cite{DSKR,GoR} and recently, low-decoding complexity codes called 2-group and 4-group decodable codes \cite{KiR1,KiR2,KiR3,RaR} and SSD codes \cite{YiK} in particular have been studied for use in cooperative networks as distributed STBCs.

\section{Unitary-Weight SSD codes}
\label{sec3}
In this section, we analyze the structure of the weight matrices of unitary-weight SSD codes. We make use of the following lemma in our analysis.
\begin{lemma}
\label{unitary_mult}
Let $\mathcal{S} = \{ S | S=\sum_{i=1}^{k}x_{iI}A_{iI}+x_{iQ}A_{iQ} , x_i \in \mathcal{A}\}$ be a unitary-weight STBC and consider the STBC $\mathcal{S}_U \triangleq  \{ US  | S \in \mathcal{S}\}$, where $U$ is any unitary matrix.  Then, $\mathcal{S}_U$ is SSD iff $\mathcal{S}$ is SSD. Further, both the codes have the same coding gain for the constellation $\mathcal{A}$.
\end{lemma}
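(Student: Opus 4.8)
The plan is to reduce everything to the observation that left-multiplication by a unitary matrix leaves all the relevant Gram-type products invariant. First I would write down the weight matrices of $\mathcal{S}_U$ explicitly: since $US = \sum_{i=1}^{k}\left(x_{iI}(UA_{iI}) + x_{iQ}(UA_{iQ})\right)$, the weight matrices of $\mathcal{S}_U$ are exactly $\{UA_{iI}, UA_{iQ}\}$. Because $U$ is unitary and each $A_{iI}, A_{iQ}$ is unitary, each product $UA_{iI}$ and $UA_{iQ}$ is again unitary, so $\mathcal{S}_U$ is itself a unitary-weight code; moreover, since $U$ is invertible, left multiplication is a bijective $\mathbb{R}$-linear map, so the $2k$ new weight matrices remain linearly independent over $\mathbb{R}$ and $\mathcal{S}_U$ genuinely encodes $k$ complex symbols.

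The heart of the SSD equivalence is the identity $(UA)^{H}(UB) = A^{H}U^{H}UB = A^{H}B$, valid for any unitary $U$. Applying this to each pairing in \eqref{A2}, I would compute
\begin{equation*}
(UA_{iI})^{H}(UA_{jQ}) + (UA_{jQ})^{H}(UA_{iI}) = A_{iI}^{H}A_{jQ} + A_{jQ}^{H}A_{iI},
\end{equation*}
together with the analogous identities for the $(iI,jI)$ and $(iQ,jQ)$ pairs. Thus the defining conditions \eqref{A2} hold for the weight matrices of $\mathcal{S}_U$ if and only if they hold for those of $\mathcal{S}$. Since, by the classification preceding this lemma, a linear STBC is SSD precisely when its weight matrices satisfy \eqref{A2}, this establishes that $\mathcal{S}_U$ is SSD iff $\mathcal{S}$ is SSD.

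For the coding-gain claim I would work directly with codeword difference matrices. A typical difference for $\mathcal{S}_U$ is $US - U\hat{S} = U(S - \hat{S})$, and $S \mapsto US$ sets up a bijection between the nonzero differences of $\mathcal{S}$ and those of $\mathcal{S}_U$. Computing the Gram matrix gives
\begin{equation*}
\left(U(S-\hat{S})\right)^{H}\left(U(S-\hat{S})\right) = (S-\hat{S})^{H}U^{H}U(S-\hat{S}) = (S-\hat{S})^{H}(S-\hat{S}),
\end{equation*}
so the two codes produce identical Gram matrices over corresponding difference pairs. Hence the nonzero eigenvalues $\lambda_i$ and the rank $r$ coincide difference-by-difference, and the minimization defining $\delta$ in Definition \ref{def3} returns the same value for both codes.

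I do not expect a serious obstacle here: once the unitary-invariance identity is isolated, both assertions are immediate. The only points requiring a word of care are the preservation of the $\mathbb{R}$-linear independence of the weight matrices (needed so that $\mathcal{S}_U$ is a bona fide code with $k$ symbols) and the explicit use of the equivalence ``SSD $\Leftrightarrow$ \eqref{A2}'' from the classification above, which lets me phrase the SSD equivalence purely in terms of the algebraic conditions rather than re-deriving the decoding-metric decomposition. As an alternative viewpoint, one could substitute $Y' = U^{H}Y$ to turn the channel $Y = (US)H + N$ into $Y' = SH + N'$ with $N'$ again white Gaussian, making the ML decoding problems for the two codes literally identical and yielding both the SSD equivalence and the coding-gain equality at once; the algebraic route, however, is more self-contained.
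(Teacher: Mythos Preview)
Your proposal is correct and follows essentially the same approach as the paper: identify the weight matrices of $\mathcal{S}_U$ as $UA_{iI},UA_{iQ}$, use $(UA)^H(UB)=A^HB$ to see that \eqref{A2} holds for one set iff it holds for the other, and then observe that $(U\Delta S)^H(U\Delta S)=(\Delta S)^H(\Delta S)$ to equate the coding gains. The paper's proof merely asserts these verifications are ``easy,'' whereas you have spelled them out and added the (welcome) remarks on preservation of unitarity and $\mathbb{R}$-linear independence; the alternative noise-whitening viewpoint you mention is not in the paper but is a nice complement.
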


\begin{proof}
The proof is straightforward. For the STBC $\mathcal{S}_U$, the weight matrices are $UA_{1I}$, $UA_{iQ}$, $i = 1,2,\cdots,k$. It is easy to verify that if the matrices $A_{iI}$, $A_{iQ}$, $i = 1,2,\cdots,k$ satisfy \eqref{A2}, then, the matrices  $UA_{1I}$, $UA_{iQ}$, $i = 1,2,\cdots,k$ also satisfy \eqref{A2} and vice-versa. Further, for any pair of distinct codeword matrices $S$ and $\hat{S}$, the eigenvalues of $(S-\hat{S})^H(S-\hat{s})$ are the same as that of $(US-U\hat{S})^H(US-U\hat{S})$, making the coding gain the same for both the STBCs.
\end{proof}
The STBCs $\mathcal{S}$ and $\mathcal{S}_U$ are said to be \emph{equivalent}. To simplify our analysis of unitary-weight codes, we make use of \emph{normalization} as described below. Let $\mathcal{S}$ be a unitary-weight STBC and let its codeword matrix $S$ be expressed as
\begin{equation*}
S = \sum_{i=1}^{k} (x_{iI}A_{iI}^\prime + x_{iQ}A_{iQ}^\prime).
\end{equation*}
Consider the code $\mathcal{S}_N \triangleq \{A_{1I}^{\prime H}S | S \in \mathcal{S}\}$. Clearly, from Lemma \ref{unitary_mult}, $\mathcal{S}_N$ is equivalent to $\mathcal{S}$. The weight matrices of $\mathcal{S}_N $ are
\begin{eqnarray*}
\label{normalizessd}
\begin{array}{rl}
A_{iI} & = A_{1I}^{\prime H} A_{iI}^\prime, \\
A_{iQ} & = A_{1I}^{\prime H} A_{iQ}^\prime.
\end{array}
\end{eqnarray*}
With this, a codeword matrix $S_N$ of $\mathcal{S}_N$ can be written as
\begin{equation*}
S_N  = x_{1I}I_n+x_{1Q}A_{1Q}+\sum_{i=2}^{k} (x_{iI}A_{iI}+x_{iQ}A_{iQ}).
\end{equation*}
We call the code $\mathcal{S}_N $ to be the normalized code of $\mathcal{S}$. In general, any unitary-weight SSD code with one of its weight matrices being the identity matrix is called \emph{normalized} unitary-weight SSD code. Studying unitary-weight SSD codes becomes simpler by studying normalized unitary-weight SSD codes. Also, an upper bound on the rate of unitary-weight SSD codes is the same as that of the normalized unitary-weight SSD codes. For the normalized unitary-weight SSD code transmitting $k$ symbols in $n$ channel uses, the conditions presented in \eqref{A1} and \eqref{A2} can be rewritten as 
\begin{eqnarray}
A_{iI}^H &=&-A_{iI}~(equivalently,~A_{iI}^2 = -I_n),\label{B11} \\
A_{iQ}^H&=&-A_{iQ}~(equivalently,~A_{iQ}^2 = -I_n),\label{B12} \\
A_{1Q}^HA_{iI} &=& A_{iI}A_{1Q}, \label{B13} \\
A_{1Q}^HA_{iQ} &=& A_{iQ}A_{1Q}, \label{B14}
\end{eqnarray}
for $i = 2,\cdots,k$, and
\begin{eqnarray}
A_{iI}A_{jI} &=& -A_{jI}A_{iI} \label{B15}, \\
A_{iQ}A_{jQ} &=& -A_{jQ}A_{iQ} \label{B16}, \\
A_{iI}A_{jQ} &=& -A_{jQ}A_{iI} \label{B17},
\end{eqnarray}
for $2 \leq i \neq j \leq k $. So, every weight matrix except $A_{1I} = I_n$ and $A_{1Q}$ should square to $-I_n$. Shown below is the grouping of weight matrices (We will later show that $A_{iQ} = \pm A_{1Q}A_{iI}$, $i =2, \cdots k$).
\begin{center}
\begin{tabular}{l|l|l|l}
$A_{1I} = I_n$  & $A_{2I}$ & $\cdots$ & $A_{kI}$ \\ \hline
$A_{1Q}$  & $A_{2Q}$ & $\cdots$ & $A_{kQ}$ \\
\end{tabular}
\end{center}
Except $I_n$, the elements in the first row should mutually anticommute and also square to $-I_n$. From \eqref{B13} and \eqref{B14}, it is clear that if $A_{1Q}^2 = -I_n$, then, $A_{1Q}$ should anticommute with all the weight matrices except $A_{1I} = I_n$.  So, the upper bound on the rate of a unitary-weight SSD code is determined by the number of mutually anticommuting unitary matrices. The following section deals with determining the upper bound.


\section{An upper bound on the rate of Unitary-Weight SSD codes}
\label{sec4}
In this section, we determine the upper bound on the rate of unitary-weight SSD codes and also give a general construction scheme to obtain codes meeting the upper bound. To do so, we make use of the following lemmas regarding matrices of size $n\times n$.

\begin{lemma}\label{lemma_2}
\cite{anti_matric} Consider $n \times n$ matrices with complex entries.
\begin{enumerate}
\item If $n = 2^an_0$, with $n_0$ odd, then there are $l$ elements of $GL(n,\mathbb{C})$ that anticommute pairwise if and only if $l \leq 2a+1$.	
\item If $n=2^a$ and matrices $F_1, \cdots, F_{2a}$ anticommute pairwise, then the set of products $F_{i_1}F_{i_2}\cdots F_{i_s}$ with $1 \leq i_1 < \cdots < i_s \leq 2a$ along with $I_n$ forms a basis for the $2^{2a}$ dimensional space of all $n \times n$ matrices over $\mathbb{C}$. In each case $F_{i}^2$ is a scaled identity matrix.
\end{enumerate}
\end{lemma}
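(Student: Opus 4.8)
The plan is to treat the two assertions separately but with a common toolkit: the algebra $\mathcal A$ generated by the $F_i$, the behaviour of ordered products under the anticommutation relations, and a trace pairing that detects linear independence, supplemented for the sharp counting bound by the representation theory of complex Clifford algebras.

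For the second assertion (the basis claim, $n=2^a$) I would proceed as follows. First I would record the elementary fact that each square $F_i^2$ is \emph{central} in $\mathcal A$: conjugating $F_i^2$ by any generator $F_j$ leaves it fixed because the two sign flips cancel, so $F_i^2$ commutes with every generator. Next, for a nonempty ordered product $P_U=F_{u_1}\cdots F_{u_s}$ with $u_1<\cdots<u_s$, conjugation by a single $F_m$ multiplies $P_U$ by $(-1)^c$, where $c$ is the number of factors distinct from $m$; a short parity bookkeeping (choose $m\notin U$ when $|U|$ is odd, which is possible since then $|U|<2a$, and $m\in U$ when $|U|$ is even and positive) always yields $c$ odd, whence $F_mP_UF_m^{-1}=-P_U$ and $tr(P_U)=0$. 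The same computation with a central factor $C$ inserted gives $tr(C P_U)=0$. With these two facts the $2^{2a}$ ordered products $\{P_T\}$, indexed by subsets $T\subseteq\{1,\dots,2a\}$, are linearly independent: writing $S\,\triangle\,T$ for the symmetric difference one checks $P_S^{-1}P_T=\pm\,C\,P_{S\triangle T}$ with $C$ central and invertible, so the pairing $tr(P_S^{-1}P_T)$ equals $\pm n$ when $T=S$ and $0$ otherwise, making the Gram matrix diagonal and invertible. Since there are $2^{2a}=n^2=\dim_{\mathbb C}M_n(\mathbb C)$ of them, they form a basis and $\mathcal A=M_n(\mathbb C)$. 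Finally, as $\mathcal A$ then acts irreducibly on $\mathbb C^n$ and each $F_i^2$ commutes with all of $\mathcal A$, Schur's lemma forces $F_i^2$ to be a scalar matrix, giving the last sentence.

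For the first assertion I would handle the two directions separately. Achievability of $l=2a+1$ is by explicit construction: take the standard $2a$ pairwise anticommuting $2^a\times2^a$ matrices built as Kronecker products of the Pauli matrices, adjoin their ordered product $F_1\cdots F_{2a}$ (which anticommutes with each $F_i$ since $2a-1$ is odd) to obtain $2a+1$ of them, and then tensor each with $I_{n_0}$ to pass from size $2^a$ to size $n=2^a n_0$ while preserving invertibility and pairwise anticommutation. For the bound, suppose $F_1,\dots,F_l\in GL(n,\mathbb C)$ anticommute pairwise. Restricting $\mathcal A$ to an irreducible submodule $W\subseteq\mathbb C^n$, Schur makes each $F_i^2$ act as a nonzero scalar on $W$, so the $F_i|_W$ generate a representation of the complex Clifford algebra $\mathbb{C}l_l$; since $\mathbb{C}l_l$ is semisimple with irreducible modules of dimension exactly $2^{\lfloor l/2\rfloor}$, every irreducible $\mathcal A$-submodule of $\mathbb C^n$ has dimension divisible by $2^{\lfloor l/2\rfloor}$, hence $2^{\lfloor l/2\rfloor}$ divides $n=2^a n_0$. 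As $n_0$ is odd this forces $\lfloor l/2\rfloor\le a$, i.e. $l\le 2a+1$.

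I expect the representation-theoretic bound in the first assertion to be the main obstacle. The self-contained trace argument for the second assertion and for achievability is routine once the parity bookkeeping is in place, but a crude dimension count alone, $2^{l-1}\le n^2$, is \emph{not} sharp enough to give $l\le 2a+1$ for odd $n_0>1$; the divisibility $2^{\lfloor l/2\rfloor}\mid n$ genuinely needs both that over $\mathbb C$ the minimal faithful module of $\mathbb{C}l_l$ has dimension $2^{\lfloor l/2\rfloor}$ and that the representation on $\mathbb C^n$ decomposes into such modules. Pinning down this minimal dimension (where the even case reduces to the basis statement just proved) and justifying the reduction to scalar squares on an irreducible component in the general $GL(n,\mathbb C)$ setting are the delicate points.
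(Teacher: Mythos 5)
First, a point of reference: the paper does not actually prove this lemma --- its entire proof reads ``Available in \cite{anti_matric}'' --- so your write-up can only be measured against the cited Shapiro--Martin note, not against anything in the paper. Most of what you propose is correct. The centrality of the $F_i^2$, the parity argument showing each nonempty ordered product is conjugate to its own negative and hence traceless, and the pairing $tr(P_S^{-1}P_T)=\pm n\,\delta_{S,T}$ do give linear independence of the $2^{2a}=n^2$ products, hence the basis claim and, via the center of $M_n(\mathbb C)$, the scalarity of $F_i^2$; achievability of $l=2a+1$ by adjoining $F_1\cdots F_{2a}$ and tensoring with $I_{n_0}$ is also fine. (One remark: part 2 as stated never says the $F_i$ are invertible, yet your proof uses $F_m^{-1}$ and $P_S^{-1}$ throughout --- and must, since the basis claim fails for singular $F_i$; e.g.\ a nilpotent $F_1$ anticommuting with $F_2=\mathrm{diag}(1,-1)$ has $F_1F_2=-F_1$. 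So invertibility has to be read into part 2, as the paper implicitly does when applying it.)

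The one step that does not hold as written is the inference ``every irreducible $\mathcal A$-submodule of $\mathbb C^n$ has dimension $2^{\lfloor l/2\rfloor}$, hence $2^{\lfloor l/2\rfloor}\mid n$.'' That requires $\mathbb C^n$ to be a \emph{direct sum} of irreducible $\mathcal A$-submodules, and the semisimplicity of $\mathbb{C}l_l$ does not supply this, because $\mathcal A$ is not a quotient of $\mathbb{C}l_l$: the squares $F_i^2$ are central but need not be scalar, or even diagonalizable, on all of $\mathbb C^n$. Concretely, $F_1=\sigma_x\otimes A$ with $A$ a $2\times 2$ unipotent Jordan block and $F_2=\sigma_y\otimes I_2$ are invertible and anticommute, yet $F_1^2=I_2\otimes A^2$ is a nontrivial central Jordan block, so $\mathcal A$ is not semisimple and $\mathbb C^4$ has a unique irreducible $\mathcal A$-submodule with no invariant complement. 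The repair is standard: take a Jordan--H\"older composition series of $\mathbb C^n$. Each $F_i$ and its inverse (a polynomial in $F_i$ by Cayley--Hamilton) preserve the filtration, so the $F_i$ act invertibly on each composition factor; Schur then makes the squares nonzero scalars there, each factor becomes an irreducible $\mathbb{C}l_l$-module of dimension exactly $2^{\lfloor l/2\rfloor}$, and summing the dimensions of the factors gives $2^{\lfloor l/2\rfloor}\mid n=2^a n_0$, whence $l\le 2a+1$. With that substitution your argument is complete.
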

\begin{proof}
Available in \cite{anti_matric}.
\end{proof}

Let $F_1, \cdots, F_{2a}$ be anticommuting, anti-Hermitian, unitary matrices (so that $F_i^2 = -I_n$, $i = 1, 2,\cdots, 2a$). The following two lemmas are applicable for such matrices.

\begin{lemma}\label{lemma_3}
The product $F_{i_1}F_{i_2}\cdots F_{i_s}$ with $1 \leq i_1 < \cdots < i_s \leq 2a$ squares to $(-1)^{\frac{s(s+1)}{2}}I_n$.
\end{lemma}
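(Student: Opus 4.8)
The plan is to compute $P^2$ directly, where $P = F_{i_1}F_{i_2}\cdots F_{i_s}$, using only the two defining properties of the $F_i$: each squares to $-I_n$ and distinct ones anticommute. Since the indices $i_1 < \cdots < i_s$ are fixed and distinct, I would relabel them as $G_1 = F_{i_1}, \ldots, G_s = F_{i_s}$ to lighten the notation, so that $G_\ell^2 = -I_n$ and $G_\ell G_m = -G_m G_\ell$ for $\ell \neq m$, and the goal becomes showing $(G_1 G_2 \cdots G_s)^2 = (-1)^{s(s+1)/2}I_n$.

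First I would establish a recursion. Writing out $P^2 = G_1 G_2 \cdots G_s \, G_1 G_2 \cdots G_s$, I take the second occurrence of $G_1$ and move it leftward past $G_s, G_{s-1}, \ldots, G_2$. Each of these $s-1$ matrices anticommutes with $G_1$, so this introduces a factor $(-1)^{s-1}$ and brings the two copies of $G_1$ adjacent. Applying $G_1^2 = -I_n$ then yields
\begin{equation*}
P^2 = (-1)^{s-1}(-I_n)\,(G_2 \cdots G_s)(G_2 \cdots G_s) = (-1)^{s}(G_2\cdots G_s)^2 .
\end{equation*}
Denoting by $c(s)$ the scalar with $(G_1\cdots G_s)^2 = c(s)I_n$, this reads $c(s) = (-1)^s c(s-1)$, with base case $c(1) = G_1^2 = -1$.

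Solving the recursion gives $c(s) = (-1)^{s + (s-1) + \cdots + 1} = (-1)^{s(s+1)/2}$, which is exactly the claim. As a cross-check, I would note that the same answer follows from a direct sorting argument: rearranging the index sequence $(1,2,\ldots,s,1,2,\ldots,s)$ into $(1,1,2,2,\ldots,s,s)$ requires $\frac{s(s-1)}{2}$ adjacent transpositions (each contributing a factor $-1$ by anticommutation), after which the $s$ resulting squares $G_\ell^2$ contribute $(-1)^s$; the product $(-1)^{s(s-1)/2}(-1)^{s}$ again equals $(-1)^{s(s+1)/2}$.

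There is no serious obstacle here: the statement rests solely on anticommutation and the value $F_i^2 = -I_n$, and Lemma \ref{lemma_2} already guarantees that families $F_1,\ldots,F_{2a}$ with these properties exist. The only point demanding care is the sign bookkeeping — precisely, counting how many anticommutations are incurred when the copies of each $G_\ell$ are brought together — and the recursion is the device that organizes this counting cleanly, so that no explicit enumeration of inversions is required.
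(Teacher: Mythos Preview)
Your proof is correct and is essentially the same as the paper's: both move each $F_{i_\ell}$ from the second copy of the product leftward past the remaining factors, incurring $(-1)^{(s-1)+(s-2)+\cdots+1}$ from anticommutations and then $(-1)^s$ from the $s$ squares $F_{i_\ell}^2=-I_n$. The only cosmetic difference is that the paper writes this out as a single chain of equalities, whereas you package it as the recursion $c(s)=(-1)^s c(s-1)$ with $c(1)=-1$.
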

\begin{proof}
$(F_{i_1}F_{i_2} \cdots F_{i_s})(F_{i_1}F_{i_2} \cdots F_{i_s})$
\begin{eqnarray*}
 & = & (-1)^{s-1}(F_{i_1}^2 F_{i_2} \cdots F_{i_s})(F_{i_2}F_{i_3} \cdots F_{i_s}) \\
& = & (-1)^{s-1}(-1)^{s-2}(F_{i_1}^2 F_{i_2}^2 \cdots F_{i_s})(F_{i_3}F_{i_4} \cdots F_{i_s}) \\
& = & (-1)^{[(s-1)+(s-2)+\cdots 1]}(F_{i_1}^2 F_{i_2}^2 \cdots F_{i_s}^2) \\
& = & (-1)^{\frac{s(s-1)}{2}}(-1)^{s}I_n \\
& = & (-1)^{\frac{s(s+1)}{2}}I_n.
\end{eqnarray*}
This proves the lemma.
\end{proof}

\begin{lemma}\label{lemma_4}
 Let $\Omega_1 = \{ F_{i_1},F_{i_2},\cdots,F_{i_s} \}$ and $\Omega_2$ $= \{ F_{j_1},F_{j_2},\cdots,F_{j_r} \}$
with $1 \leq i_1 < \cdots < i_s \leq 2a$ and $1 \leq j_1 < \cdots < j_r \leq 2a$. Let $\vert \Omega_1 \cap \Omega_2 \vert = p$. Then the product matrix $F_{i_1}F_{i_2}\cdots F_{i_s}$ commutes with $F_{j_1}F_{j_2}\cdots F_{j_r}$ if exactly one of the following is satisfied, and anticommutes otherwise.
\begin{enumerate}
\item $r,s$ and $p$ are all odd.
\item The product $rs$ is even and $p$ is even (including 0).
\end{enumerate}
\end{lemma}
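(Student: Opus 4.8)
The plan is to pin down the single sign $\epsilon\in\{+1,-1\}$ for which
$F_{i_1}\cdots F_{i_s}\,F_{j_1}\cdots F_{j_r}=\epsilon\,F_{j_1}\cdots F_{j_r}\,F_{i_1}\cdots F_{i_s}$, and to show that $\epsilon=(-1)^{rs-p}$. Commuting then means $\epsilon=+1$, i.e. $rs\equiv p\pmod 2$, and I will verify at the end that this congruence is precisely the disjunction of the two listed cases. Writing $A=F_{i_1}\cdots F_{i_s}$ and $B=F_{j_1}\cdots F_{j_r}$, the only algebraic input I use is that two of the generators anticommute when their indices differ and (trivially) commute when they coincide, since $F_iF_i=F_iF_i$.

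To compute $\epsilon$ I would transport $B$ from the right of $A$ to its left one factor at a time. First move $F_{j_1}$ leftward past all of $F_{i_1},\dots,F_{i_s}$: each crossing with a factor whose index differs from $j_1$ contributes $-1$, while the crossing with the unique factor equal to $F_{j_1}$ (which occurs exactly when $j_1\in\Omega_1\cap\Omega_2$) contributes $+1$. The sign accrued in this pass is therefore $(-1)^{\,s-c_1}$, where $c_1=1$ if $j_1$ is a shared index and $c_1=0$ otherwise. Repeating for $F_{j_2},\dots,F_{j_r}$ — each slid past the unchanged block $F_{i_1}\cdots F_{i_s}$ and parked immediately to the right of the already-transported factors — reassembles exactly $BA$, with total sign $\prod_{t=1}^r(-1)^{\,s-c_t}=(-1)^{\sum_t(s-c_t)}$.

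It then remains to evaluate the exponent: $\sum_{t=1}^r(s-c_t)=rs-\sum_{t=1}^r c_t=rs-p$, since $\sum_t c_t$ counts the $j_t$ lying in $\Omega_1$, which is $|\Omega_1\cap\Omega_2|=p$. Hence $\epsilon=(-1)^{rs-p}$, so the two products commute iff $rs$ and $p$ have the same parity. To match the statement I would finish with a short parity check: same parity occurs exactly when $rs$ and $p$ are both odd — forcing $r,s,p$ all odd, which is case (1) — or both even, which is case (2) (as $rs$ even means $r$ or $s$ is even). These two cases are mutually exclusive, one having $p$ odd and the other $p$ even, so ``exactly one holds'' is consistent, and the complementary opposite-parity situations are precisely the anticommuting ones.

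The step needing the most care is the bookkeeping for the shared generators: it is tempting either to annihilate a coincident pair $F_{j_t}=F_{i_u}$ via $F_i^2=-I_n$ or to assign it the anticommuting sign $-1$, and either would corrupt the count. The correct observation is that in passing $B$ across $A$ I merely \emph{reorder} factors to form $BA$ — I never reduce the word — so a coincident pair is only crossed and contributes $+1$; this is exactly what makes the shared-index count $p$ enter the exponent. Everything else is routine, and note that Lemma \ref{lemma_3} is not needed here, since the argument rests only on the pairwise (anti)commutation of the generators themselves.
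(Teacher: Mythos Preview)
Your proof is correct and follows essentially the same route as the paper: both arguments move the factors of one product past the other one at a time, observe that a shared generator contributes sign $(-1)^{s-1}$ while a non-shared one contributes $(-1)^{s}$, and arrive at the overall sign $(-1)^{rs-p}$, after which the parity split into the two listed cases is immediate. Your exposition is more explicit about the bookkeeping and the final parity check, but the mathematical content is identical.
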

\begin{proof}
If $F_{j_k} \in \Omega_1 \cap \Omega_2$, we note that
\begin{equation*}
(F_{i_1}F_{i_2} \cdots F_{i_s})F_{j_k} = (-1)^{s-1}F_{j_k}(F_{i_1}F_{i_2} \cdots F_{i_s}),
\end{equation*}
\noindent and if $F_{j_k} \notin \Omega_1 \cap \Omega_2$,
\begin{equation*}
(F_{i_1}F_{i_2} \cdots F_{i_s})F_{j_k} = (-1)^{s}F_{j_k}(F_{i_1}F_{i_2} \cdots F_{i_s}).
\end{equation*}
\noindent Now,\\
$(F_{i_1}F_{i_2} \cdots F_{i_s})(F_{j_1}F_{j_2} \cdots F_{j_r})$
\begin{eqnarray*}
& = & (-1)^{p(s-1)}(-1)^{(r-p)s}(F_{j_1}F_{j_2} \cdots F_{j_r})(F_{i_1}F_{i_2} \cdots F_{i_s})\\
& = & (-1)^{rs-p}(F_{j_1}F_{j_2} \cdots F_{j_r})(F_{i_1}F_{i_2} \cdots F_{i_s}).
\end{eqnarray*}
\emph{case} 1). Since $r,s$ and $p$ are all odd, $(-1)^{rs-p}$ = 1.\\
\emph{case} 2). The product $rs$ is even and  $p$ is even (including 0). So, $(-1)^{rs-p}$ = 1.
\end{proof}

From Lemma \ref{lemma_2}, the maximum number of pairwise anticommuting matrices of size  $2^a \times 2^a$ is $2a+1$. Hence, the maximum possible value of $k$, i.e., the number of complex information symbols, is $2a+2$, since we also consider $I_n$ as a weight matrix. In order to provide an achievable upper bound on the rate of unitary-weight SSD codes, we first assume that the case $k = 2a+2$ is a possibility. Denoting the $2a+1$ anticommuting matrices by $F_1$, $F_2, \cdots, F_{2a+1}$, we note from Lemma \ref{lemma_2} that the set $\{F_1^{\lambda_1}F_2^{\lambda_2}\cdots F_{2a}^{\lambda_{2a}}, \lambda_i \in \{0,1\}, i = 1, \cdots, 2a \}$ is a basis for $\mathbb{C}^{2^a \times 2^a}$ over $\mathbb{C}$. Therefore, $\{F_1^{\lambda_1}F_2^{\lambda_2}\cdots F_{2a}^{\lambda_{2a}}, jF_1^{\lambda_1}F_2^{\lambda_2}\cdots F_{2a}^{\lambda_{2a}}, \lambda_i \in \{0,1\}, i = 1, \cdots, 2a \}$ is a basis for $\mathbb{C}^{2^a \times 2^a}$ over $\mathbb{R}$. It can be checked by applying Lemma \ref{lemma_4} that the only product matrix that anticommutes with $F_1, F_2, \cdots,$ and $F_{2a}$ is $cF_1F_2\cdots F_{2a}$, where $c \in \mathbb{C}$.  So, it must be that $F_{2a+1} = cF_1F_2\cdots F_{2a}$, $c \in \mathbb{C}$.

For our construction, we need anticommuting, anti-Hermitian, unitary matrices (so that they square to $-I_n$). An excellent treatment of irreducible matrix representations of Clifford algebras is given in \cite{TiH} and the same paper also presents an algorithm to obtain $2a+1$ pairwise anticommuting $2^a \times 2^a$ matrices that all square to $-I_n$ ($n = 2^a$). In fact, these matrices are precisely the weight matrices (except $I_n$) of square CODs. As mentioned before, we denote them by $F_1$, $F_2, \cdots, F_{2a+1}$, with $F_{2a+1} = cF_1F_2\cdots F_{2a}$, and
\begin{equation*}
c = \left\{ \begin{array}{ccc}
 \pm j & \textrm{if} & (F_1F_2\cdots F_{2a})^2 = I_n, \\
 \pm 1 & \textrm{otherwise}. & \\
\end{array} \right.
\end{equation*}
It must be noted that the matrices obtained from \cite{TiH} are not unique, i.e., these are not the only set of mutually anticommuting, anti-Hermitian, unitary matrices of size $2^a \times 2^a$.
It can be noted by applying Lemma \ref{lemma_3} that $(F_1F_2\cdots F_{2a})^2 = -I_n$ when $a$ is odd. We are now ready to prove the main result of the paper.

\begin{theorem}
\label{main_thm}
The rate $\frac{k}{2^a}$ of a $2^a\times 2^a$ unitary-weight SSD code is upper bounded as 
\[
\frac{k}{2^a} \leq \frac{2a}{2^a}= \frac{a}{2^{a-1}}.
\]
\end{theorem}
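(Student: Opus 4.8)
The rate bound $\frac{k}{2^a}\le\frac{2a}{2^a}$ is equivalent to $k\le 2a$, so the entire task is to bound the number $k$ of complex symbols. I would work with the normalized code (permissible by Lemma \ref{unitary_mult}), so that $A_{1I}=I_n$ and conditions \eqref{B11}--\eqref{B17} hold. The matrices $A_{2I},\dots,A_{kI}$ are then anti-Hermitian and unitary (hence square to $-I_n$) and pairwise anticommute by \eqref{B15}; by Lemma \ref{lemma_2}(1) there are at most $2a+1$ such matrices, so $k-1\le 2a+1$, i.e.\ $k\le 2a+2$. The real work is to shave off the last two possibilities $k=2a+2$ and $k=2a+1$, in each case by exhibiting a forced $\mathbb{R}$-linear dependence among the weight matrices, contradicting the linear independence required below \eqref{ldceqn}.

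The mechanism I would use is to study, for each $i\in\{2,\dots,k\}$, the product $G_i\triangleq A_{iI}A_{iQ}$. Using \eqref{B15} and \eqref{B17} one verifies directly that $G_i$ commutes with $A_{jI}$ for every $j\neq i$, i.e.\ $G_i$ lies in the centralizer of the anticommuting family $\{A_{jI}:2\le j\le k,\ j\neq i\}$. The size of that family is exactly what separates the two cases. If $k=2a+2$ the family has $2a$ members, which by Lemma \ref{lemma_2}(2) generate the full matrix algebra $\mathbb{C}^{2^a\times 2^a}$, whose centre is $\mathbb{C}I_n$; hence $G_i=\alpha_iI_n$ and $A_{iQ}=-\alpha_iA_{iI}$. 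If $k=2a+1$ the family has $2a-1$ members, and a short parity computation with Lemma \ref{lemma_4} shows its centralizer is the two-dimensional span of $I_n$ and $\prod_{j\neq i}A_{jI}$; hence $A_{iQ}=c_iA_{iI}+d_i\Pi$ for scalars $c_i,d_i$, where $\Pi\triangleq A_{2I}A_{3I}\cdots A_{kI}$.

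Finishing is then immediate. In the case $k=2a+2$, $A_{iQ}$ is a scalar multiple of $A_{iI}$; the anti-Hermiticity \eqref{B12} forces that scalar to be real, so $A_{iI}$ and $A_{iQ}$ are $\mathbb{R}$-linearly dependent, a contradiction. In the case $k=2a+1$ I would impose the remaining relation \eqref{B16}: expanding the anticommutator of $A_{iQ}=c_iA_{iI}+d_i\Pi$ and $A_{jQ}=c_jA_{jI}+d_j\Pi$ for $i\neq j$, the $A_{iI}A_{jI}$ and $A_{iI}\Pi$ cross-terms all vanish because $A_{iI}$ anticommutes with $A_{jI}$ and with $\Pi$ (Lemma \ref{lemma_4}), leaving $2(-1)^a d_id_j I_n$ after using Lemma \ref{lemma_3} for $\Pi^2=(-1)^aI_n$; hence $d_id_j=0$ for all $i\neq j$, so at most one $d_i$ is nonzero. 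For every other index $A_{iQ}=c_iA_{iI}$ with $c_i$ real, again contradicting independence. Either way $k\le 2a$. I expect the delicate step to be the centralizer computation via Lemma \ref{lemma_4}---getting the parities right so that precisely the scalar (respectively two-dimensional) span emerges---together with the observation that anti-Hermiticity makes the offending scalar \emph{real}, which is what upgrades mere $\mathbb{C}$-proportionality to the genuine $\mathbb{R}$-dependence the theorem needs.
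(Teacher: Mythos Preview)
Your argument is correct, but it is organized quite differently from the paper's proof. The paper fixes the top row $A_{2I},\dots,A_{kI}$ to be $F_1,\dots,F_{k-1}$ from a full anticommuting family, expands each $A_{iQ}$ in the basis $\{F_1^{\lambda_1}\cdots F_{2a}^{\lambda_{2a}}\}$ of Lemma~\ref{lemma_2}(2), and then, by repeated application of Lemma~\ref{lemma_4}, lists the few basis monomials that can occur in $A_{iQ}$; the contradictions for $k=2a+2$ and $k=2a+1$ come from comparing these short lists for two or three values of $i$. You instead encapsulate the same information in the single observation that $G_i=A_{iI}A_{iQ}$ centralizes $\{A_{jI}:j\neq i\}$, so that the relevant ``list'' is just the centralizer, which you compute once (dimension $1$ when the family has $2a$ members, dimension $2$ when it has $2a-1$). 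This is cleaner and avoids the explicit basis bookkeeping; the trade-off is that the paper's more explicit route immediately yields the achieving construction (its Claim~3, with $A_{iQ}=\pm m\prod_{l\neq i}F_l$) and feeds directly into the structural Theorem~2, whereas your argument stops at the inequality $k\le 2a$.

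Two remarks on presentation. First, the theorem as stated is only the upper bound, so your omission of the achievability part is formally acceptable, but the paper does prove it and you may want to add one line noting that the grouping $A_{1Q}=m\prod_{i=1}^{2a-1}F_i$, $A_{lQ}=-A_{1Q}A_{lI}$ realizes $k=2a$. Second, in your $k=2a+1$ endgame you should say explicitly that $k-1=2a\ge 2$ guarantees at least two indices $i\ge 2$, so that $d_id_j=0$ for all $i\neq j$ really forces some $d_i=0$; otherwise the ``for every other index'' clause is vacuous.
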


\begin{proof}
We prove the theorem in three parts as follows. \\
\textit{Claim 1}: $k \neq 2a+2$. \\
To prove this, let us first suppose that $k = 2a+2$, in which case, we have the following grouping scheme.\\
\begin{center}
\begin{tabular}{l|l|l|l|l}
$I_n$  & $F_1$ & $F_2$ & $\cdots$ & $F_{2a+1}$ \\ \hline
$A_{1Q}$  & $A_{2Q}$ & $A_{3Q}$ & $\cdots$ & $A_{(2a+2)Q}$ \\
\end{tabular}
\end{center}

Let $A_{iQ} = \sum_{j=1}^{2^{2a}}a_{i,j} F_1^{\lambda_{1,j}}F_2^{\lambda_{2,j}}\cdots F_{2a}^{\lambda_{2a,j}}$, $\lambda_{m,j} \in \{0,1\}$, $m = 1, 2, \cdots, 2a$, $i=1,2,\cdots,2a+2$ and $a_{i,j} \in \mathbb{C}$. This is possible because of Lemma \ref{lemma_2}. Considering $A_{2Q}$, since $A_{2Q}$ anticommutes with $F_2$, $F_3$, $\cdots$ and $F_{2a+1}$, every individual term of $A_{2Q}$ must anticommute with $F_2$, $F_3$, $\cdots$ and $F_{2a+1}$. So, we look for all possible candidates from the set $\{F_1^{\lambda_1}F_2^{\lambda_2}\cdots F_{2a}^{\lambda_{2a}}, \lambda_i \in \{0,1\}, i = 1, \cdots, 2a \}$ which anticommute with $F_2$, $F_3$, $\cdots$ and $F_{2a+1}$. By applying Lemma \ref{lemma_4}, the only possible choice is $F_1$. Since the weight matrices are required to be independent over $\mathbb{R}$ and in view of the condition in \eqref{B12}, there is no valid possibility for $A_{2Q}$. As a result, a unitary-weight $2^a \times 2^a$ SSD code with $2a+2$ independent complex symbols does not exist. \\

\noindent \textit{Claim 2}: $k \neq 2a+1$. \\
To prove this, we assume that $ k = 2a+1$ is a possibility, in which case, we have the following grouping of weight matrices.\\
\begin{center}
\begin{tabular}{l|l|l|l|l}
$I_n$  & $F_1$ & $F_2$ & $\cdots$ & $F_{2a}$ \\ \hline
$A_{1Q}$  & $A_{2Q}$ & $A_{3Q}$ & $\cdots$ & $A_{(2a+1)Q}$ \\
\end{tabular}
\end{center}
Considering $A_{2Q}$, each of the terms that $A_{2Q}$ is a linear combination of should anticommute with $F_2$, $F_3$, $\cdots$ and $F_{2a}$. The only possibilities from the set $\{F_1^{\lambda_1}F_2^{\lambda_2}\cdots F_{2a}^{\lambda_{2a}}, \lambda_i \in \{0,1\}, i = 1, \cdots, 2a \}$ are $F_1$ and $F_1F_2\cdots F_{2a} = cF_{2a+1}, c= \pm j$ or $\pm 1$. Therefore, $A_{2Q} = a_{2,1}F_1+a_{2,2}F_1F_2\cdots F_{2a}$. Next, considering $A_{3Q}$, the only elements anticommuting with $F_1$, $F_3$, $\cdots$ and $F_{2a}$ are $F_2$ and $F_1F_2\cdots F_{2a}$. Therefore, $A_{3Q} = a_{3,1}F_2+a_{3,2}F_1F_2\cdots F_{2a}$. Since $A_{2Q}$ should also anticommute with $A_{3Q}$, either $a_{2,2} = 0$ or $a_{3,2} = 0$. So, either $A_{2Q} = \pm F_1$ and $A_{3Q} = a_{3,1}F_2+a_{3,2}F_1F_2\cdots F_{2a}$ or $A_{2Q} = a_{2,1}F_1+a_{2,2}F_1F_2\cdots F_{2a}$ and $A_{3Q} = \pm F_2$. In either case, the assignment violates the rule that the weight matrices are linearly independent over $\mathbb{R}$. As a result, we can't have any valid elements as the weight matrices and hence, $k \neq 2a+1$.\\

\noindent \textit{Claim 3}: $ k = 2a $. \\
Consider the following grouping scheme of weight matrices.\\
{\footnotesize
\begin{center}
\begin{tabular}{l|c|c|c|c}
$I_n$ & $\cdots$ & $F_l$  & $\cdots$ & $F_{2a-1}$ \\ \hline
$m\prod_{i=1}^{2a-1}F_i$ & $\cdots$ & $m\prod_{i=1,i\neq l}^{2a-1}F_i$ & $\cdots$ & $m\prod_{i=1}^{2a-2}F_i$ \\
\end{tabular}
\end{center}
}

In the above grouping scheme, $m = j$ if $a$ is odd, and $m=1$ if $a$ is even. It can be noted that $A_{iQ} = -A_{1Q}A_{iI}$, $i=2,3,\cdots,2a$. Clearly, the weight matrices are linearly independent over $\mathbb{R}$ and satisfy \eqref{B11}-\eqref{B17}. Hence, an SSD code transmitting $2a$ complex symbols in $2^a$ channel uses exists. This completes the proof.
\end{proof}

We observe that for $2$ transmit antennas, $k \neq 3$. So, the rate of a unitary-weight SSD code for 2 transmit antennas can be at most 1 complex symbol per channel use, which is also the rate of the well-known Alamouti code, which is single-real-symbol decodable and offers full diversity for all complex constellations. So, the unitary-weight SSD code for 2 transmit antennas offers no advantage compared to the Alamouti code. So, in the subsequent analysis, we only consider codes for $2^a$ transmit antennas, $a > 1$. \\

\begin{theorem}
 Any maximal rate, normalized unitary-weight SSD code must satisfy the following in addition to satisfying \eqref{B11}-\eqref{B17}.
\begin{eqnarray*}
A_{1Q} &=& A_{1Q}^H (equivalently, A_{1Q}^2 = I_n), \\
\label{C12}
A_{iI}A_{1Q} &=& A_{1Q}A_{iI}, \\
\label{C13}
A_{iQ} &=& \pm A_{iI}A_{1Q},
\end{eqnarray*}
for $i = 2,3,\cdots, 2a$.
\end{theorem}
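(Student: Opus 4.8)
The plan is to reduce everything to a single structural fact about $A_{1Q}$ and then read off the three assertions. Throughout I work with the normalized code in the maximal case $k=2a$, so $A_{1I}=I_n$ and, by \eqref{B11} and \eqref{B15}, the matrices $F_1:=A_{2I},\dots,F_{2a-1}:=A_{(2a)I}$ are $2a-1$ pairwise anticommuting, anti-Hermitian, unitary matrices. By Lemma \ref{lemma_2} I may adjoin one further such matrix $F_{2a}$ so that $F_1,\dots,F_{2a}$ anticommute pairwise and their ordered products form a basis of $\mathbb{C}^{2^a\times 2^a}$; by Lemma \ref{lemma_3} a product $P$ of length $s$ satisfies $P^H=(-1)^{s(s+1)/2}P$ and $P^2=(-1)^{s(s+1)/2}I_n$, so each $P$ is Hermitian or anti-Hermitian. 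The first observation is that it suffices to prove the commuting relation $A_{iI}A_{1Q}=A_{1Q}A_{iI}$: combined with \eqref{B13} this gives $A_{1Q}^HA_{iI}=A_{iI}A_{1Q}=A_{1Q}A_{iI}$, and cancelling the invertible $A_{iI}$ yields $A_{1Q}^H=A_{1Q}$, i.e. the Hermitian (equivalently $A_{1Q}^2=I_n$) assertion.

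To constrain $A_{1Q}$, I rewrite \eqref{B13} as $A_{iI}A_{1Q}A_{iI}^{-1}=A_{1Q}^H$ (using $A_{iI}^{-1}=-A_{iI}$) and expand $A_{1Q}=\sum_P c_PP$ in the product basis. Conjugation by $F_l$ multiplies each $P$ by a sign $\eta_{l,P}=\pm1$ computable from Lemma \ref{lemma_4}, while $P^H=\epsilon_PP$ with $\epsilon_P=(-1)^{s(s+1)/2}$; matching coefficients in $A_{iI}A_{1Q}A_{iI}^{-1}=A_{1Q}^H$ for every $l\le 2a-1$ forces $\eta_{l,P}$ to be independent of $l$ on the support of $A_{1Q}$. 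A short case analysis on the parity of $s$ and on whether each $F_l$ occurs in $P$ shows that the only admissible products are $I_n$, $E:=F_1\cdots F_{2a-1}$, $F_{2a}$ and $W:=F_1\cdots F_{2a}$. Equivalently, writing $A_{1Q}=H+S$ for its Hermitian and anti-Hermitian parts, $H=c_0I_n+c_2E$ commutes with each $A_{iI}$ while $S=c_1F_{2a}+c_WW$ anticommutes with each $A_{iI}$; note also that, $A_{1Q}$ being unitary hence normal, $H$ and $S$ commute and $H^2-S^2=I_n$.

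The crux is to show $S=0$, i.e. that $A_{1Q}$ has no $F_{2a}$-component. Since $F_{2a}$ is an auxiliary matrix foreign to the code, this cannot follow from $A_{1Q}$ alone, so I must bring in the remaining quadrature matrices. Each $A_{iQ}$ ($i\ge2$) is, by \eqref{B12}, anti-Hermitian, unitary and squares to $-I_n$; by \eqref{B17} it anticommutes with every $A_{jI}$ with $j\ne i$; by \eqref{B16} the $A_{iQ}$ anticommute pairwise; and \eqref{B14} couples each to $A_{1Q}$. Expanding $A_{iQ}$ in the basis, anticommutation with $\{F_l\}_{l\ne i-1}$ leaves only the four products $P_a^{(i)}:=\prod_{l\le 2a-1,\,l\ne i-1}F_l$, $W$, $F_{i-1}$ and $F_{2a}$. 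Imposing \eqref{B14}, the pairwise anticommutation \eqref{B16}, the square conditions and, crucially, the $\mathbb{R}$-linear independence of all $4a$ weight matrices then forces the $W$- and $F_{2a}$-components of every weight matrix to vanish; in particular $c_1=c_W=0$, so $S=0$. The elimination runs exactly as in Claims 1 and 2 of Theorem \ref{main_thm}: a surviving $W$ (or $F_{2a}$) term in two quadrature matrices makes their anticommutator a nonzero scalar multiple of $I_n$ (coming from the $W^2$ cross term), and killing it collapses one matrix onto a single generator, colliding with an existing weight matrix. I expect carrying out this elimination cleanly, while tracking all the reality constraints on the coefficients, to be the main obstacle.

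Once $S=0$, the remaining assertions fall out. Then $A_{1Q}=c_0I_n+c_2E$, and $A_{1Q}^2=A_{1Q}A_{1Q}^H=I_n$ forces $c_0c_2=0$; since $A_{1Q}$ must be $\mathbb{R}$-independent of $A_{1I}=I_n$ we cannot have $A_{1Q}=\pm I_n$, so $c_0=0$ and $A_{1Q}=c_2E$ with $c_2^2(-1)^a=1$ (thus $c_2=\pm1$ or $\pm j$ according to the parity of $a$, matching the $m$ of Theorem \ref{main_thm}). As $E$ has odd length $2a-1$ it commutes with every $F_l$, giving the commuting relation and hence the Hermitian relation. For the last assertion, set $B_i:=A_{iI}A_{1Q}$; using the two relations just proved together with $A_{iI}^2=-I_n$ one checks directly that $B_i$ is anti-Hermitian, unitary, squares to $-I_n$, anticommutes with each $A_{jI}$ ($j\ne i$) and satisfies \eqref{B14}. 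Because $A_{iI}E=\pm P_a^{(i)}$, and the $W,F_{2a}$ components have already been eliminated (the $F_{i-1}$-component of $A_{iQ}$ being excluded since it would be $\mathbb{R}$-proportional to $A_{iI}$), both $A_{iQ}$ and $B_i$ are scalar multiples of $P_a^{(i)}$; hence $A_{iQ}=\lambda_iB_i$, and $A_{iQ}^2=B_i^2=-I_n$ forces $\lambda_i=\pm1$, i.e. $A_{iQ}=\pm A_{iI}A_{1Q}$, completing the proof.
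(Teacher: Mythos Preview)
Your plan is sound and ultimately runs along the same rails as the paper: expand every weight matrix in the product basis $\{F_1^{\lambda_1}\cdots F_{2a}^{\lambda_{2a}}\}$, use Lemma~\ref{lemma_4} to cut the support of $A_{1Q}$ and each $A_{iQ}$ down to four basis products, and then kill the $F_{2a}$ and $W$ components with the square, anticommutation and independence constraints. Two organizational points deserve comment. First, your observation that \eqref{B13} makes ``$A_{1Q}$ Hermitian'' equivalent to ``$A_{1Q}$ commutes with every $A_{iI}$'' is cleaner than the paper's opening case split on $A_{1Q}^2=-I_n$ versus $A_{1Q}^2\neq -I_n$; in fact, once you have established $A_{iQ}=\gamma_iP_a^{(i)}$, plugging into \eqref{B14} in the form $[H,A_{iQ}]=\{S,A_{iQ}\}$ immediately forces $c_1=c_W=0$, which is more direct than the paper's route (the paper instead multiplies the whole code by $-F_1$, obtains an equivalent normalized SSD code, and reads off $A_{1Q}$ from the already-proved formula for the new $A_{iQ}'$). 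Second, the ``main obstacle'' you defer is \emph{not} the mechanism of Claims~1 and~2 of Theorem~\ref{main_thm}; those claims dispose of a single $A_{2Q}$ by leaving it only one basis product. Here the correct elimination---the paper's case~2---needs \emph{three} quadrature matrices $A_{2Q},A_{3Q},A_{4Q}$: pairwise anticommutation among them yields $\beta_i\beta_j+c\,\delta_i\delta_j=0$ for all pairs, which with the square conditions forces $\alpha_i=\pm1$ and $A_{iQ}\mp A_{iI}\in\mathrm{span}\{F_{2a}+pW\}$, producing an $\mathbb{R}$-linear dependence among $\{A_{iI},A_{iQ}\}_{i=2,3,4}$ rather than a collapse of one matrix onto a single generator. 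This also means your argument, like the paper's, tacitly needs $a\ge 2$.
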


\begin{proof}
 Consider the following grouping of weight matrices.\\
\begin{center}
\begin{tabular}{l|l|l|l|l}
$I_n$  & $F_1$ & $F_2$ & $\cdots$ & $F_{2a-1}$ \\ \hline
$A_{1Q}$  & $A_{2Q}$ & $A_{3Q}$ & $\cdots$ & $A_{(2a)Q}$ \\
\end{tabular}
\end{center}

\begin{figure*}
\begin{center}
\begin{tabular}{l|l|l|l|l}
$A_{1I}^\prime = I_n$ & $A_{2I}^\prime =-F_1$ & $A_{3I}^\prime = -F_1F_2$ & $\cdots$ & $A_{(2a)}^\prime = -F_1F_{2a-1}$ \\ \hline
$A_{1Q}^\prime = \pm m \prod_{i=1}^{2a-1}F_i$ & $A_{2Q}^\prime = -F_1A_{1Q}$ & $A_{3Q}^\prime = \pm m\prod_{i=3}^{2a-1}F_i$ & $\cdots$ & $A_{(2a)Q}^\prime =\pm m \prod_{i=2}^{2a-2}F_i$ \\
\end{tabular}
\end{center}
\hrule
\end{figure*}

$A_{1Q}$ can have two possibilities. Either $A_{1Q}^2 = -I_n$ or  $A_{1Q}^2 \neq -I_n$.
\begin{enumerate}
\item Let $A_{1Q}^2 = -I_n$. We prove that this is not a possibility. If it were true, i.e., $A_{1Q}^2 = -I_n$, then $A_{1Q}$ should anticommute with $F_1$,  $F_2$, $\cdots$ and $F_{2a-1}$, as also seen in \eqref{B13}. The only matrices from the set $\{F_1^{\lambda_1}F_2^{\lambda_2}\cdots F_{2a}^{\lambda_{2a}}, \lambda_i \in \{0,1\}, i = 1, \cdots, 2a \}$ that anticommute with $F_1$, $F_2$, $\cdots$ and $F_{2a-1}$ are $F_{2a}$ and $F_1F_2\cdots F_{2a}$. So, let  $A_{1Q} = a_{1,1}F_{2a}+a_{1,2}F_1F_2\cdots F_{2a}$, with
\begin{equation} \label{incomp}
a_{1,1}^2+ca_{1,2}^2 =1,
\end{equation}
\noindent where $c=1$ if $(F_1F_2\cdots F_{2a})^2 = -I_n$ ($a$ is odd) and $c=-1$ if $(F_1F_2\cdots F_{2a})^2 = I_n$ ($a$ is even). Next, considering $A_{2Q}$, the only matrices from the set $\{F_1^{\lambda_1}F_2^{\lambda_2}\cdots F_{2a}^{\lambda_{2a}}, \lambda_i \in \{0,1\}, i = 1, \cdots, 2a \}$ that anticommute with $F_2$, $F_3$, $\cdots$ and $F_{2a-1}$ are $F_1$, $F_{2a}$, $F_1F_2\cdots F_{2a}$ and $F_2F_3\cdots F_{2a-1}$. As a result, $A_{2Q} = a_{2,1}F_1+a_{2,2}F_{2a}+a_{2,3}F_2F_3\cdots F_{2a-1} + a_{2,4}F_1F_2\cdots F_{2a}$. Further, since $A_{2Q}^2 = -I_n$, we have
$A_{2Q}^2 = -(a_{2,1}^2+a_{2,2}^2 - ca_{2,3}^2 +ca_{2,4}^2)I_n + 2a_{2,1}a_{2,3}F_1F_2\cdots F_{2a-1} + 2a_{2,2}a_{2,3}F_2F_3\cdots F_{2a} + 2ca_{2,3}a_{2,4}F_1F_{2a}$, with $c$ as mentioned before. Since  $I_n$, $F_2F_3\cdots F_{2a}$, $F_1F_2\cdots F_{2a-1}$ and $F_1F_{2a}$ are linearly independent over $\mathbb{C}$, either $a_{2,1} = a_{2,2} = a_{2,4} = 0$ or $a_{2,3} = 0$. Suppose $a_{2,1} = a_{2,2} = a_{2,4} = 0$, $A_{2Q} = a_{2,3}F_2F_3\cdots F_{2a-1}$. Since $A_{1Q}$ anticommutes with $A_{2Q}$, we see that $A_{2Q}$ cannot be $a_{2,3}F_2F_3\cdots F_{2a-1}$. This is because both $F_{2a}$ and $F_1F_2\cdots F_{2a}$ commute with $F_2F_3\cdots F_{2a-1}$. Therefore, $A_{2Q} = a_{2,1}F_1+A_{2,2}F_{2a}+a_{2,4}F_1F_2\cdots F_{2a}$. By a similar argument, $A_{3Q} = a_{3,1}F_2+a_{3,2}F_{2a}+a_{3,4}F_1F_2\cdots F_{2a}$.

Considering that $A_{1Q}$, $A_{2Q}$ and $A_{3Q}$ anticommute pairwise, we must have

\begin{eqnarray}
\label{eq1}
 a_{1,1}a_{2,2}+ ca_{1,2}a_{2,4}& = &0,\\
\label{eq2}
a_{1,1}a_{3,2}+ ca_{1,2}a_{3,4}& =& 0,\\
\label{eq3}
a_{2,2}a_{3,2}+ ca_{2,4}a_{3,4}& =& 0.
\end{eqnarray}
From \eqref{eq1}, \eqref{eq2} and \eqref{eq3}, we obtain
\begin{equation*}
\frac{a_{1,1}}{a_{1,2}} = \frac{-ca_{2,4}}{a_{2,2}} = \frac{-ca_{3,4}}{a_{3,2}} = \frac{a_{2,2}}{a_{2,4}}.
\end{equation*}

Hence, $a_{1,1}^2+ca_{1,2}^2 = 0$, which contradicts \eqref{incomp}. So, $A_{1Q}^2 \neq -I_n$ and $A_{1Q}$ cannot be anti-Hermitian.
\item Let $A_{1Q}^2 \neq -I_n$. In this case, we first look for possibilities for $A_{iQ}$, $i=2,\cdots, 2a$. As argued before, either $A_{2Q} = a_{2,1}F_1+a_{2,2}F_{2a}+a_{2,4}F_1F_2\cdots F_{2a}$ or $A_{2Q} = mF_2F_3\cdots F_{2a-1}$, with $m = \pm 1$ if $a$ is even and $m = \pm j$ if $a$ is odd. Assuming that $A_{2Q} = a_{2,1}F_1+a_{2,2}F_{2a}+a_{2,4}F_1F_2\cdots F_{2a}$, we have
\begin{center}
 $A_{3Q} = a_{3,1}F_2+a_{3,2}F_{2a}+a_{3,4}F_1F_2\cdots F_{2a}$,\\
 $A_{4Q} = a_{4,1}F_3+a_{4,2}F_{2a}+a_{4,4}F_1F_2\cdots F_{2a}$.\\
\end{center}
 Since $A_{2Q}^2 = A_{3Q}^2 = A_{4Q}^2 = -I_n$,
\begin{eqnarray*}
a_{2,1}^2 + a_{2,2}^2 + ca_{2,4}^2 &=& 1, \\
a_{3,1}^2 + a_{3,2}^2 + ca_{3,4}^2 &=& 1,
\end{eqnarray*}
\begin{equation}\label{deq3}
a_{4,1}^2 + a_{4,2}^2 + ca_{4,4}^2 = 1,
\end{equation}
with
\begin{equation*}
c = \left\{ \begin{array}{ccc}
 1 & \textrm{if} & (F_1F_2\cdots F_{2a})^2 = -I_n( \textrm{$a$ is odd}),\\
-1 & \textrm{if} & (F_1F_2\cdots F_{2a})^2 = I_n( \textrm{$a$ is even}).\\
\end{array}\right.
\end{equation*}
Further, considering that $A_{2Q}$, $A_{3Q}$ and $A_{4Q}$ anticommute with each other, we have the following equalities.
\begin{eqnarray*}
a_{2,2}a_{3,2} + ca_{2,4}a_{3,4}& = & 0,\\
a_{2,2}a_{4,2} + ca_{2,4}a_{4,4}& = & 0,\\
a_{3,2}a_{4,2} + ca_{3,4}a_{4,4}& = & 0.
\end{eqnarray*}
with $c$ as mentioned before. From the above set of equations, we obtain
\begin{equation*}
\frac{a_{2,2}}{a_{2,4}} = \frac{-ca_{3,4}}{a_{3,2}} = \frac{-ca_{4,4}}{a_{4,2}} = \frac{a_{4,2}}{a_{4,4}}.
\end{equation*}

So, $a_{4,2}^2+ca_{4,4}^2 = 0 \Leftrightarrow pa_{4,2} = a_{4,4}$, with $p =\pm j$ if $c = 1$ and $p = \pm 1$ if $c = -1$. So, from \eqref{deq3}, we obtain,
$a_{4,1} =  \pm 1$. By a Similar argument, we obtain, $a_{2,1} =  \pm 1,  pa_{2,2} =  a_{2,4}$, $a_{3,1} =  \pm 1,  pa_{3,2} = a_{3,4}$. Therefore,
\begin{eqnarray*}
A_{2Q}& = & \pm F_1+ a_{2,2}( F_{2a} + pF_1F_2\cdots F_{2a}), \\
A_{3Q}& = & \pm F_2+ a_{3,2}( F_{2a} + pF_1F_2\cdots F_{2a}), \\
A_{4Q}& = & \pm F_3+ a_{4,2}( F_{2a} + pF_1F_2\cdots F_{2a}).
\end{eqnarray*}
It is easy to see that the above assignment of matrices violates the linear independence of the matrices $A_{iI}, A_{iQ}$, $i = 2,3,4$ over $\mathbb{R}$. Therefore, the assumption that $A_{2Q} = a_{2,1}F_1+A_{2,2}F_{2a}+a_{2,4}F_1F_2\cdots F_{2a}$ is not valid. So, let
\begin{equation*}
A_{2Q} = \pm mF_2F_3\cdots F_{2a-1} = \pm m\prod_{i=1,i \neq 2}^{2a}A_{iI},
\end{equation*}
\noindent where,
\begin{equation*}
m = \left\{ \begin{array}{ccc}
j & \textrm{if} & \textrm{$a$ is odd},\\
1 & \textrm{if} & \textrm{$a$ is even}.\\
\end{array}\right.
\end{equation*}

Now, the only possibility is that $A_{3Q} = \pm mF_1F_3\cdots F_{2a-1} = \pm m\prod_{i=1,i\neq 3}^{2a}A_{iI}$. Similarly, by assigning $\pm m\prod_{i=1,i\neq j}^{2a}A_{iI}$ to $A_{jQ}, j=4, \cdots, 2a$, we see that the conditions in \eqref{B15}, \eqref{B16} and \eqref{B17} are satisfied and from the discussion made above, this is the only assignment possible. Now, we only need to find a valid assignment for $A_{1Q}$. Firstly, we note that
\begin{equation}
\label{mat_eq}
A_{iI}A_{iQ} = \pm A_{jI}A_{jQ}, 2 \leq i \neq j \leq 2a.
\end{equation}
From Lemma \ref{unitary_mult}, multiplying all the weight matrices by $-A_{2I}$ (i.e., $-F_1$) will result in another unitary-weight SSD code with the weight matrices grouped as shown at the top of the page, after interchanging the first and the second columns. It should be noted in the above grouping scheme that the elements in the first row except $I_n$ are all mutually anticommuting matrices and all of them also anticommute with $F_1F_{2a}$. So, $-F_1$, $-F_1F_2$, $-F_1F_3,\cdots$, $-F_1F_{2a-1}$ and $-F_1F_{2a}$ are $2a$ pairwise anticommuting matrices. Hence, instead of $F_1$, $F_2,\cdots$, $F_{2a-1}$, if we were to chose $-F_1$, $-F_1F_2$, $-F_1F_3$, $\cdots$ and $-F_1F_{2a-1}$ as the $2a-1$ anticommuting matrices, we would end up with the weight matrices as shown in the table at the top of the page. So,
\begin{equation}
\label{mat_eq1}
A_{iI}^\prime A_{iQ}^\prime = \pm A_{jI}^\prime A_{jQ}^\prime , 2 \leq i \neq j \leq 2a.
\end{equation}

\begin{figure*}
\begin{center}
\begin{equation}\label{4_Tx}
 S = \left[\begin{array}{rrrr}
        x_{1I}+jx_{2I} & x_{3I}-jx_{4Q} & x_{4I}+jx_{3Q} & x_{2Q}-jx_{1Q} \\
        -x_{3I}-jx_{4Q} & x_{1I}-jx_{2I} & x_{2Q}+jx_{1Q} & -x_{4I}+jx_{3Q} \\
        -x_{4I}+jx_{3Q} & -x_{2Q}-jx_{1Q} & x_{1I}-jx_{2I} & x_{3I}+jx_{4Q} \\
        -x_{2Q}+jx_{1Q} & x_{4I}+jx_{3Q} & -x_{3I}+jx_{4Q} & x_{1I}+jx_{2I} \\
       \end{array}\right].
\end{equation}
\end{center}
\hrule
\end{figure*}
From \eqref{mat_eq} and \eqref{mat_eq1},  $A_{1Q} = \pm mF_1F_2\cdots F_{2a-1}$. This further implies that $A_{1Q}$ must be a unitary, Hermitian matrix, because of the choice of $m$.
\end{enumerate}

\noindent So, the weight matrices of the normalized unitary-weight SSD code for $2^a$ transmit antennas are \\
{\footnotesize
\begin{center}
\begin{tabular}{l|l|l|l}
$I_n$ & $F_1$ &  $\cdots$ & $F_{2a-1}$ \\ \hline
$\pm m\prod_{i=1}^{2a-1}F_i$ & $\pm m\prod_{i=2}^{2a-1}F_i$ & $\cdots$ & $\pm m\prod_{i=1}^{2a-2}F_i$ \\
\end{tabular}
\end{center}
}

This completes the proof of the theorem.
\end{proof}

For 4 transmit antennas, by applying the procedure outlined in \cite{TiH}, we obtain the following pairwise anticommuting, anti-Hermitian matrices.
\begin{equation*}
F_1 = \left[ \begin{array}{rrrr}
j & 0 & 0 & 0 \\
0 & -j & 0 & 0 \\
0 & 0 & -j & 0 \\
0 & 0 & 0 & j \\
\end{array}\right], ~~~~ F_2 = \left[ \begin{array}{rrrr}
0 & 1 & 0 & 0 \\
-1 & 0 & 0 & 0 \\
0 & 0 & 0 & 1 \\
0 & 0 & -1 & 0 \\
\end{array}\right]
\end{equation*}
\begin{equation*}
F_3 = \left[ \begin{array}{rrrr}
0 & 0 & 1 & 0 \\
0 & 0 & 0 & -1 \\
-1 & 0 & 0 & 0 \\
0 & 1 & 0 & 0 \\
\end{array}\right], ~~~~ F_4 = \left[ \begin{array}{rrrr}
0 & j & 0 & 0 \\
j & 0 & 0 & 0 \\
0 & 0 & 0 & j \\
0 & 0 & j & 0 \\
\end{array}\right].
\end{equation*}

 For constructing a maximal rate, unitary weight SSD code for 4 transmit antennas, we only need 3 pairwise anticommuting, anti-Hermitian matrices. Hence, choosing $F_1$, $F_2$ and $F_3$ and applying the construction method described above, we obtain a maximal rate, unitary-weight SSD code for 4 transmit antennas, a codeword matrix $S$ of which is shown in \eqref{4_Tx}, at the top of the next page.

In general, for $2^a$ transmit antennas,  we need $2a$ unitary, anti-Hermitian, pairwise anticommuting matrices. If there are exactly $2a-1$ pairwise anticommuting matrices of size $2^a \times 2^a$, then, any matrix among them is a scaled product of the other $2a-2$. So, the following observations can be made about any maximal rate, normalized unitary-weight SSD code.
\begin{enumerate}
\item Either $A_{iI},i=2,3,\cdots,2a$ are $2a-1$ among $2a+1$ pairwise anticommuting matrices and $A_{iQ}, i=2,3,\cdots,2a$ are exactly $2a-1$ pairwise anticommuting matrices, or $A_{iQ},i=2,3,\cdots,2a$ are $2a-1$ among $2a+1$ pairwise anticommuting matrices and $A_{iI}, i=2,3,\cdots,2a$ are exactly $2a-1$ pairwise anticommuting matrices.
\item $A_{1Q}$ is a Hermitian matrix and $A_{1Q} = \prod_{i=2}^{2a}A_{iI}$ if $A_{iI},i=2,3,\cdots, 2a$ are $2a-1$ among $2a+1$ pairwise anticommuting matrices, or $A_{1Q} = \prod_{i=2}^{2a}A_{iQ}$ if $A_{iQ},i=2,3,\cdots, 2a$ are $2a-1$ among $2a+1$ pairwise anticommuting matrices.
\end{enumerate}

\section{Diversity and Coding gain of unitary-weight SSD codes}
\label{sec5}
We have seen in Lemma \ref{unitary_mult} that the coding gain of a unitary-weight SSD codes does not change when normalized. In this section, we obtain a common expression for the coding gain of all unitary-weight SSD codes and identify the conditions on QAM constellations that will allow unitary weight SSD codes to have full transmit diversity and high coding gain. Let $S$ and $S^\prime$ be two distinct codewords of any normalized unitary-weight SSD code $\mathcal{S}_N$. Let
\begin{eqnarray*}
S & = & \sum_{i=1}^{2a}x_{iI}A_{iI}+x_{iQ}A_{iQ},\\
S^\prime & = & \sum_{i=1}^{2a}x_{iI}^\prime A_{iI}+x_{iQ}^\prime A_{iQ},
\end{eqnarray*}
with
\begin{center}
$A_{1I} =I_n, ~~ A_{iI}A_{iQ} = \pm A_{jI}A_{jQ} = \pm A_{1Q},~~2 \leq i \neq j \leq 2a$,\\
$A_{1Q}^H = A_{1Q}, A_{iI}^H = -A_{iI}, A_{iQ}^H = -A_{iQ}, ~~~ 2 \leq i \leq 2a$.
\end{center}
Let $\Delta S \triangleq S - S^\prime$, $\Delta x_i \triangleq x_i - x_i^\prime$, $\Delta x_{iI} \triangleq x_{iI} -x_{iI}^\prime$ and $\Delta x_{iQ} \triangleq x_{iQ} -x_{iQ}^\prime$. Then,
{\footnotesize
\begin{eqnarray*}
(\Delta S)^H\Delta S & = & \left(\sum_{i=1}^{2a}\Delta x_{iI} A_{iI} + \Delta x_{iQ} A_{iQ}\right)^H \times\\
& & \left(\sum_{m=1}^{2a}\Delta x_{mI} A_{mI} + \Delta x_{mQ} A_{mQ}\right)\\
& = & \sum_{i=1}^{2a}\left(\Delta x_{iI}^2+ \Delta x_{iQ}^2\right)I_n \pm 2 \Delta x_{iI}\Delta x_{iQ}A_{iI}A_{iQ}\\
& = & \sum_{i=1}^{2a}\left(\Delta x_{iI}^2+ \Delta x_{iQ}^2\right)I_n \pm 2\Delta x_{iI}\Delta x_{iQ}A_{1Q}.
\end{eqnarray*}
}

Since $A_{1Q}$ is unitary and Hermitian, the eigenvalues of $A_{1Q}$ are $\pm 1$ and $A_{1Q}$ is unitarily diagonalizable. Let $A_{1Q} = E\Lambda E^H$, where $E$ is unitary and $\Lambda$ is a diagonal matrix with the diagonal entries being $\pm 1$. Therefore,

{\footnotesize
\begin{eqnarray*}
(\Delta S)^H \Delta S & = & \sum_{i=1}^{2a}\left(\Delta x_{iI}^2+ \Delta x_{iQ}^2\right)EE^H \pm 2\Delta x_{iI}\Delta x_{iQ}E\Lambda E^H\\
& = & E\left(\sum_{i=1}^{2a}\left(\Delta x_{iI}^2+ \Delta x_{iQ}^2\right)I_n \pm 2\Delta x_{iI}\Delta x_{iQ}\Lambda \right)E^H
\end{eqnarray*}
}

and $det\big((\Delta S)^H\Delta S\big)$
{\footnotesize
\begin{eqnarray*}
& = & det\left(\sum_{i=1}^{2a}\left(\Delta x_{iI}^2+ \Delta x_{iQ}^2\right)I_n \pm 2\Delta x_{iI}\Delta x_{iQ}\Lambda \right)\\
& =& \prod_{j=1}^{n}\sum_{i=1}^{2a}\left(\Delta x_{iI}^2 + \Delta x_{iQ}^2+(-1)^{k_i+s_j}2\Delta x_{iI}\Delta x_{iQ}\right)
\end{eqnarray*}
\begin{eqnarray*}
& =& \prod_{j=1}^{n}\sum_{i=1}^{2a}\left(\Delta x_{iI}+(-1)^{k_i+s_j}\Delta x_{iQ}\right)^2,
\end{eqnarray*}
}

\noindent where,
\begin{equation*}
s_j = \left\{ \begin{array}{ll}
0 & \textrm{if the $(j,j)^{th}$ entry of $\Lambda$ is 1}, \\
1 & \textrm{if the $(j,j)^{th}$ entry of $\Lambda$ is -1}, \\
\end{array}\right.
\end{equation*}
 and
\begin{equation*}
k_i = \left\{\begin{array}{lll}
0 & \textrm{if} & A_{iI}A_{iQ} = A_{1Q}, \\
1 & \textrm{if} & A_{iI}A_{iQ} = -A_{1Q}. \\
\end{array}\right.
\end{equation*}
The minimum of the determinant, denoted by $\Delta_{min}$, of $(\Delta S)^H \Delta S$ for all possible non-zero $\Delta S$ is given as 
\begin{equation*}
\Delta_{min} = \min_{\Delta S \neq 0} \left(\prod_{j=1}^{n}\sum_{i=1}^{2a}\left(\Delta x_{iI}+(-1)^{k_i+s_j}\Delta x_{iQ}\right)^2\right).
\end{equation*}
Since the expression inside the bracket in the above equation is a product of the sum of squares of real numbers, its minimum occurs when all but one among $\Delta x_i, i = 1,2,\cdots 2a$ are zeros. So,
\begin{equation}\label{dmin1}
\Delta_{min} = \min_{\Delta x_i \neq 0} \prod_{j=1}^{n}\left(\Delta x_{iI}+(-1)^{k_i+s_j}\Delta x_{iQ}\right)^2.
\end{equation}

\begin{table*}
\centering
\begin{tabular}[c]{|c|c|c|c|}
\hline
Constellation: & $4$-QAM &  $16$-QAM & $64$-QAM  \\
\hline
CIOD & 10.24 & 10.24 & 10.24  \\ \hline
 MDCQOD \cite{WWX} & $10.24$ & $10.24$ & $10.24$  \\
\hline
MDCQOD \cite{YGT} & $10.24$ & $10.24$ & $10.24$  \\ \hline
New Design & $10.24$ & $10.24$ & $10.24$  \\
\hline
\end{tabular}
\caption{Comparison of the Minimum Determinants of a few SSD codes for 4 Transmit antennas}
\label{table1}
\hrule
\end{table*}

Let $m$ be the algebraic multiplicity of 1 as the eigenvalue of $A_{1Q}$ and $n-m$ be that of -1. We make use of the following lemma to conclude that $m=n-m$.
\begin{lemma}
Let $F_1$, $F_2$, $\cdots$ and $F_{2a}$ be $2^a \times 2^a$ unitary, pairwise anticommuting matrices. Then, the product matrix $F_1^{\lambda_1}F_2^{\lambda_2}\cdots F_{2a}^{\lambda_{2a}}, \lambda_i \in \{0,1\}, i=1,2,\cdots 2a$, with the exception of $I_{2^a}$, is traceless.
\end{lemma}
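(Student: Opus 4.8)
The plan is to use the standard trace-annihilation trick: to show that a non-identity product $P = F_1^{\lambda_1} F_2^{\lambda_2}\cdots F_{2a}^{\lambda_{2a}}$ is traceless, I would exhibit one of the generators, say $F_k$, that anticommutes with $P$ in the sense $F_k P = -P F_k$. Since each $F_k$ is unitary and hence invertible, this gives $F_k P F_k^{-1} = -P$, and the cyclic invariance of the trace then forces
\[
tr(P) = tr(F_k P F_k^{-1}) = tr(-P) = -tr(P),
\]
so that $tr(P)=0$. Thus the entire content reduces to producing such an $F_k$ for every $P \neq I_{2^a}$.

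To do this, let $s = \sum_{i=1}^{2a}\lambda_i \geq 1$ be the number of generators appearing in $P$. The key computational observation is that conjugation by a single generator negates every \emph{other} generator: from $F_k F_i = -F_i F_k$ for $i \neq k$ one gets $F_k F_i F_k^{-1} = -F_i$, while trivially $F_k F_k F_k^{-1} = F_k$. Applying this factor by factor to $P$ yields $F_k P F_k^{-1} = (-1)^{t} P$, where $t$ is the number of factors of $P$ distinct from $F_k$; that is, $t = s-1$ if $\lambda_k = 1$ and $t = s$ if $\lambda_k = 0$.

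It then remains only to choose $k$ so that this sign is $-1$, which is a short parity check (equivalently, an instance of Lemma \ref{lemma_4} applied with $\Omega_2 = \{F_k\}$). If $s$ is odd, I would pick any index with $\lambda_k = 0$: such an index exists because an odd $s$ forces $s \leq 2a-1 < 2a$, and then $F_k P F_k^{-1} = (-1)^s P = -P$. If $s$ is even, then $s \geq 2$, so $P$ genuinely contains at least one generator, and picking any $k$ with $\lambda_k = 1$ gives $F_k P F_k^{-1} = (-1)^{s-1} P = -P$. In both cases the required anticommuting generator exists, completing the argument.

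I do not anticipate a serious obstacle; the only point requiring care is the bookkeeping of parity together with the existence of a usable index $k$, and here the \emph{even} size of the generating set (exactly $2a$ generators) is precisely what guarantees that an index of the needed type is always available. It is worth noting that the argument uses only the invertibility of the $F_i$ (supplied by unitarity) and their pairwise anticommutation, not that they square to a scalar, so it applies verbatim in the generality stated in the lemma.
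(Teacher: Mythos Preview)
Your argument is correct and is essentially the same as the paper's: both establish $tr(P)=0$ by exhibiting an invertible matrix that anticommutes with $P$ and invoking the cyclic invariance of the trace. The only cosmetic difference is that the paper appeals to Lemma~\ref{lemma_4} to assert that \emph{some} product matrix anticommutes with $P$, whereas you go one step further and explicitly pin down a single generator $F_k$ via the parity case split; this makes your version slightly more self-contained but does not change the underlying idea.
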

\begin{proof}
It is well known that $tr(AB)=tr(BA)$ for any two matrices $A$ and $B$. Let $A$ and $B$ be two invertible, $n \times n$ anticommuting matrices. So,
\begin{eqnarray*}
AB &=& -BA. \\
ABA^{-1} &=& -B.\\
tr(ABA^{-1}) &=& -tr(B).\\
tr(A^{-1}AB) = -tr(B) &\Leftrightarrow& tr(B) = -tr(B).
\end{eqnarray*}
\begin{equation}\label{trace}
\therefore tr(B) = 0.
\end{equation}
Similarly, it can be shown that $tr(A) = 0$. By applying Lemma \ref{lemma_4}, it can be seen that any product matrix $F_1^{\lambda_1^\prime}F_2^{\lambda_2^\prime}\cdots F_{2a}^{\lambda_{2a}^\prime}$, anticommutes with some other product matrix from the set $\{F_1^{\lambda_1}F_2^{\lambda_2}\cdots F_{2a}^{\lambda_{2a}}, \lambda_i \in \{0,1\}, i = 1,2,3,\cdots,2a\}$. Therefore, from the result obtained in \eqref{trace}, we can say that every product matrix $F_1^{\lambda_1}F_2^{\lambda_2}\cdots F_{2a}^{\lambda_{2a}}$ except $I_{2^a}$ is traceless.
\end{proof}

Since $A_{1Q}$ is a scaled product of $2a-1$ matrices among $2a$ unitary, pairwise anticommuting matrices, $A_{1Q}$ is traceless. Hence, $m=n-m$. So, \eqref{dmin1} becomes
\begin{equation}\label{dmin2}
\Delta_{min} = \min_{\Delta x_i \neq 0} \left(\Delta x_{iI}^2 - \Delta x_{iQ}^2\right)^n.
\end{equation}
From the above expression, it is clear that for maximal-rate, unitary-weight SSD codes to offer full transmit diversity, the difference set $\Delta \mathcal{A} \triangleq \{a-b |a,b \in \mathcal{A} \}$, where $\mathcal{A}$ is the constellation employed, should not have any points that lie on lines that are at $\pm 45$ degrees in the complex plane from the origin. Further, since the analysis leading up to the expression in \eqref{dmin2} is not specific to any particular unitary-weight SSD code, we can infer that for any particular constellation $\mathcal{A}$, all maximal-rate, unitary-weight SSD codes have the same coding gain.

\subsection{Diversity, coding gain calculations for QAM}\label{sec5a}
In this subsection we show that all maximal-rate, unitary-weight SSD codes have the same coding gain as the CIODs \cite{KhR} for QAM constellations. Let $y_i$, $i = 1,2,\cdots,2a$ be the information symbols that take values from a constellation $\mathcal{A}_1$. Consider the following unitary rotation.
\begin{equation*}\left[\begin{array}{c}
x_{iI}\\
x_{iQ}
\end{array}\right]
= \left[\begin{array}{cc}
\frac{1}{\sqrt{2}} & -\frac{1}{\sqrt{2}} \\
\frac{1}{\sqrt{2}} & \frac{1}{\sqrt{2}}\\
\end{array}\right]\left[\begin{array}{c}
y_{iI}\\
y_{iQ}
\end{array}\right], \forall i = 1,2,\cdots,2a.
\end{equation*}
The above operation is equivalent to rotating $y_i$ by $\pi/4$ radians to obtain $x_i$.
Then, from \eqref{dmin2}, we have
\begin{equation}\label{rot_sym}
\Delta_{min} =  \min_{\Delta y_i \neq 0}(2\Delta y_{iI} \Delta y_{iQ})^n.
\end{equation}
The above expression is the same as the one for CIOD, obtained in \cite{KhR}. It is to be noted that \eqref{rot_sym} holds even when the angle of rotation is $-\pi/4$ radians. In order to maximize $\Delta_{min}$, the minimum of the product $\vert \Delta y_{iI} \Delta y_{iQ} \vert$, called the \textit{product distance}, must be maximized. This has been done for QAM in \cite{KhR}, by rotating QAM constellations by an angle of $\pm \frac{1}{2}tan^{-1}2$. So, $y_i$, $i=1,2,\cdots,2a$, should take values from a rotated QAM constellation, with the angle of rotation being $\pm \frac{1}{2}tan^{-1}2$. So, the original information symbols $x_i$, $i=1,2,\cdots,2a$ should take values from a rotated QAM constellation, the angle of rotation being $\pm \frac{\pi}{4} \pm \frac{1}{2}tan^{-1}2$. Since the coding gain for CIOD has been maximized in \cite{KhR} by using a $\pm \frac{1}{2}tan^{-1}2$ radian rotated QAM constellation, the coding gain for all unitary-weight SSD codes when the symbols take values from a $\pm \frac{\pi}{4} \pm \frac{1}{2}tan^{-1}2$ radian rotated QAM constellation is also maximized.  Table \ref{table1} gives a comparison of the minimum determinants for the CIOD, MDCQOD and the unitary-weight SSD code presented in \eqref{4_Tx}, all the codes designed for 4 transmit antennas. In the calculations, all the codes have the same average energy but the constellation energy has been allowed to increase with the increase in constellation size. As analytically shown, the minimum determinants are the same for all the three codes.

\begin{figure}
\centering
\includegraphics[width=3.4in,height = 3.4in]{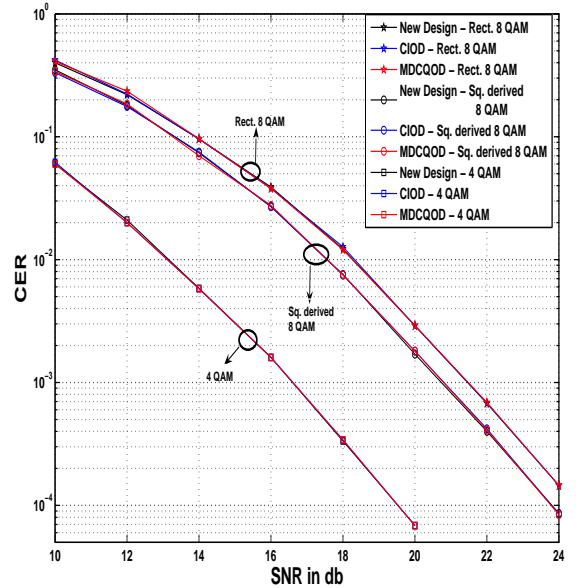}
\caption{Comparison of the CER performance of the SSD codes and the CIOD}
\label{plot}
\end{figure}

\subsection{Simulation results for our SSD codes}
\label{sec5b}
In this subsection, we provide some simulation results for 4 transmit antennas. All simulations are done assuming a quasi-static Rayleigh fading channel. The number of receive antennas is 1. Fig. \ref{plot} shows the codeword error rate (CER) performances of the CIOD for 4 transmit antennas, the MDCQOD for 4 transmit antennas \cite{WWX}, and the new design whose codeword matrix is as in \eqref{4_Tx}, at 2 bits and 3 bits per channel use (bpcu). For transmission at 2 bpcu, the constellation employed is the $\frac{1}{2}tan^{-1}2$ radian rotated 4-QAM for the CIOD and the $\frac{\pi}{4}+\frac{1}{2}tan^{-1}2$ radian rotated 4-QAM for the MDCQOD and the new design. For transmission at 3 bpcu, the constellations employed are the rotated rectangular 8-QAM and the rotated square-derived 8-QAM, the angle of rotation being the same as in the case of 4-QAM. A squared-derived 8-QAM constellation is obtained by removing the signal point with the highest energy from a 9-QAM. Specifically, it is the set $\{ -1-j,-1+j,-1+3j,1-j,1+j,1+3j,3-j,3+j\}$. The simulation results support the fact that for QAM constellations, the coding gain of the SSD codes is the same as that of the CIOD.


\section{Discussion and Concluding Remarks}\label{sec6}
In this paper, we have provided an achievable upper bound on the rate of unitary-weight SSD codes for $2^a$ transmit antennas. The upper bound has been shown to be $\frac{a}{2^{a-1}}$ complex symbols per channel use. We also have completely characterized the structure of the weight matrices of the codes meeting the upper bound. We have further shown that all unitary-weight SSD codes that meet the upper bound have the same coding gain as that of the CIODs and we have also identified the angle of rotation for QAM constellations that allow the codes to have optimum coding gain. The analysis done in this paper throws open the following questions.
\begin{enumerate}
\item What is the upper bound on the rate of square, non-unitary-weight SSD codes? Further, what are the conditions on the signal constellation that allow non-unitary-weight SSD codes to achieve full-diversity and optimum coding gain?
\item What is the upper bound on the rate of rectangular SSD codes, the class of which the rectangular MDCQODs presented in \cite{WWX} and \cite{YGT} are a subclass?
\end{enumerate}
Further, the analysis in this paper can be used to study the rates of multi-symbol decodable codes, the upper bounds of which has never been reported in literature. These questions provide some directions for future research.

\section*{Acknowledgement}
This work was partly supported by
the DRDO-IISc Program on Advanced Research in Mathematical
Engineering and by the Council of Scientific \&
Industrial Research (CSIR), India, through Research Grant (22(0365)/04/EMR-II) to B.S.~Rajan.\\
We thank X.-G.Xia for sending the preprint of \cite{WWX}. We would also like to thank the anonymous Reviewers for pointing out \cite{anti_matric}, which helped in simplifying the proof for the upper bound on the rate of unitary-weight SSD codes.



\end{document}